\def\bx{\boldsymbol{x}}
\def\dd{\mathrm{d}}
\def\eqref#1{equation~\ref{#1}}
\def\1{\bm{1}}
\DeclareMathAlphabet{\mathsfit}{\encodingdefault}{\sfdefault}{m}{sl}
\SetMathAlphabet{\mathsfit}{bold}{\encodingdefault}{\sfdefault}{bx}{n}
\def\sS{{\mathbb{S}}}
\newcounter{kaicomm}
\newcounter{bxincomm}
\definecolor{aqua}{rgb}{0.00,0.67,0.80}
\newtheorem{theorem}{Theorem}[section]
\newtheorem{definition}{Definition}[section]
\begin{document}
\twocolumn[
\icmltitle{Approximate Equivariance SO(3) Needlet Convolution}

\icmlsetsymbol{equal}{*}

\begin{icmlauthorlist}
\icmlauthor{Kai Yi}{equal,UNSW}
\icmlauthor{Jialin Chen}{equal,SJTU}
\icmlauthor{Yu Guang Wang}{UNSW,SJTU,SJTU-ZIAS}
\icmlauthor{Bingxin Zhou}{SJTU-ZIAS,USYD}\\

\icmlauthor{Pietro Li\`{o}}{Cam}
\icmlauthor{Yanan Fan}{UNSW}
\icmlauthor{Jan Hamann}{UNSW_physics}

\end{icmlauthorlist}

\icmlaffiliation{UNSW}{UNSW Data Science Hub, School of Mathematics and Statistics, University of New South Wales, Sydney, Australia}
\icmlaffiliation{SJTU}{Institute of Natural Sciences,
 School of Mathematical Sciences,
Shanghai Jiao Tong University,
Shanghai, China}
\icmlaffiliation{SJTU-ZIAS}{Zhangjiang Institute for Advanced Study, Shanghai Jiao Tong University, Shanghai, China}
\icmlaffiliation{USYD}{The University of Sydney Business School, The University of Sydney, Sydney, Australia.}
\icmlaffiliation{Cam}{Department of Computer Science and Technology, University of Cambridge,
Cambridge, United Kingdom}
\icmlaffiliation{UNSW_physics}{School of Physics, University of New South Wales, Sydney, Australia}

\icmlcorrespondingauthor{Yu Guang Wang}{yuguang.wang@sjtu.edu.cn}

\icmlkeywords{Machine Learning, ICML}

\vskip 0.15in
]
\printAffiliationsAndNotice{\icmlEqualContribution}

\begin{abstract}
This paper develops a rotation-invariant needlet convolution for rotation group SO(3) to distill multiscale information of spherical signals.
The spherical needlet transform is generalized from $\sS^2$ onto the SO(3) group, which decomposes a spherical signal to approximate and detailed spectral coefficients by a set of tight framelet operators. The spherical signal during the decomposition and reconstruction achieves rotation invariance. 
Based on needlet transforms, we form a Needlet approximate Equivariance Spherical CNN (NES) with multiple SO(3) needlet convolutional layers. The network establishes a powerful tool to extract geometric-invariant features of spherical signals. 
The model allows sufficient network scalability with multi-resolution representation. A robust signal embedding is learned with wavelet shrinkage activation function, which filters out redundant high-pass representation while maintaining approximate rotation invariance. 
The NES achieves state-of-the-art performance for quantum chemistry regression and Cosmic Microwave Background (CMB) delensing reconstruction, which shows great potential for solving scientific challenges with high-resolution and multi-scale spherical signal representation.
\end{abstract}


\section{Introduction}
\begin{figure*}[th]
    \centering
    \includegraphics[width=\textwidth]{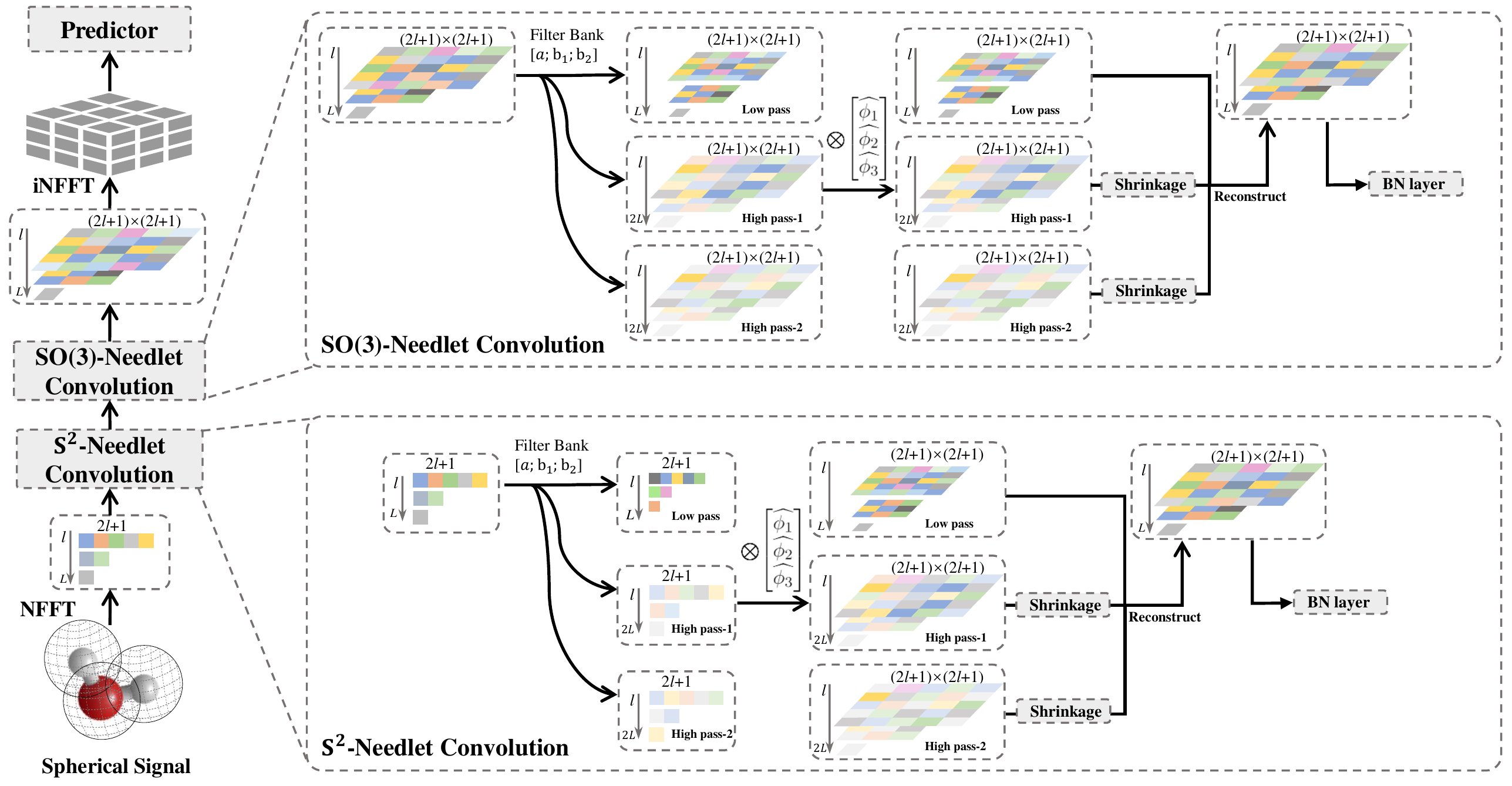}
    \vspace{-6mm}
    \caption{This figure shows the framework of our NES. As the left column shows, we first carry on a non-equispaced fast Fourier transform (NFFT) with predefined weights on the spherical signal. The following are an $\mathbb{S}^2$-Needlet Convolution and $\mathrm{SO(3)}$-Needlet Convolutions, which can be used to decompose the signal in multi scales. Then, we use the inverse NFFT (iNFFT) over the output of the $\mathrm{SO(3)}$-Needlet Convolution and feed the reconstructed signal into the downstream predictor.}
    \label{fig:needlet}
\end{figure*}

Many data types in the real world can be modeled as spherical data, such as omnidirectional images \cite{coors2018spherenet}, molecules \cite{boomsma2017spherical}, and cosmic microwave background \cite{akrami2020planck}. Such spherical signals contain abundant topological features. Unfortunately, existing research \cite{caldeira2019deepcmb, yi2020cosmo} usually maps spherical signals to $\mathbb{R}^2$ for convenient modeling with convolutional neural networks (CNNs), which results in distorted signals and ineffective shift equivariance \cite{marinucci2008spherical}. 

Alternatively, geometric deep learning \cite{bronstein2017geometric,bronstein2021geometric} provides a universal blueprint for learning stable representation of high-dimensional data in different domains to build equivariant or invariant neural network layers that respect exact or approximate data symmetries, such as translation, rotation, and permutation. As a fundamental requirement for many applications, 
it has been proven critical to preserving the symmetry property in deep learning algorithm design \cite{baek2021accurate,davies2021advancing,mendez2021geometric}.

Equivariance is a significant property of geometric deep learning models as required by many physical sciences, such as chemistry \cite{atz2021geometric} and biology \cite{townshend2021geometric}. This paper develops a scalable geometric deep learning model for spherical signal processing and learning with theoretically guaranteed rotation equivariance. Our model is based on needlet convolution on $\mathbb{S}^2$ and rotation group SO(3). The former describes the data representation on spherical point locations, while the latter records three-dimensional rotation angles of the signal. The input data features are embedded in each spherical point.
The main convolution computational unit is based on spherical needlets, which define a wavelet-like system on the two-dimensional sphere $\mathbb{S}^{2}$ that forms a tight frame on the sphere \cite{narcowich2006decomposition,narcowich2006localized,wang2017fully}. A needlet is characterized by a highly-localized spherical radial polynomial, which covers a large scale but captures detailed features in local regions. 

The \emph{needlet convolution} on SO(3) decomposes spherical signals into low-pass and high-pass needlet coefficients. By separately storing and processing approximate and detailed information of the input, the network establishes hidden embeddings with enhanced scalability. In addition, the wavelet shrinkage operation \cite{donoho1995noising,baldi2009asymptotics} gains robust representations by filtering out redundant high-pass information in the framelet domain.
The exact multiscale embeddings by SO(3)-needlet convolutions are invariant to rotation. Such convolutions can construct a deep neural network that distills the geometric invariant features of a spherical signal. We name it \textbf{N}eedlet approximate \textbf{E}quivariance \textbf{S}pherical CNN (\textbf{NES}). Inside the network, we utilize the convolution over the rotation group in multi scales to guarantee rotation equivariance. 

The NES with shrinkage activation gains provably approximate equivariance, where the equivariance error diminishes at sufficiently high scales. Moreover, the needlet convolution is implemented efficiently with fast Fourier transforms (FFTs) on the sphere and rotation group. We validate the proposed NES on different real-world scientific problems with high-resolution and multi-scale spherical signal inputs including regressing quantum chemistry molecules and 
reconstructing lensing Cosmic Microwave Background, for which our method achieves state-of-the-art performance.

\section{Spherical Needlet Framework}
\label{sec:method}
Needlets are a type of framelets \cite{wang2020tight,han2017framelets} that enjoys good localization properties in both spatial and harmonic spaces. We formulate a spherical needlet transform, which projects the given spherical signal to a set of multi-scale needlet representations in the framelet domain. The new representations can be uniquely decomposed. They are easy to compute, and divide approximate and detailed information into different scale levels, as traditional wavelets.

\subsection{Characterization of Multi-scale Spherical Needlets}
\label{MRA}
Needlets are defined on a Riemannian manifold $\mathcal{M}$. This paper considers a special case of $\mathcal{M}$, i.e., on $\mathbb{S}^2$ or $\mathrm{SO(3)}$. 
We define the spherical needlets with a \emph{filter bank} $\eta:=\left\{a ; b^{1}, \ldots, b^{r}\right\} \subset l_{1}(\mathbb{Z}):=\left\{h=\left\{h_{k}\right\}_{k \in \mathbb{Z}} \subset \mathbb{C}: \sum_{k \in \mathbb{Z}}\left|h_{k}\right|<\infty\right\}$ and a set of associated \emph{generating functions} $\Psi=\{\alpha;\beta^1,\cdots,\beta^r\} \subset L_1(\mathbb{R})$. 
We name filter $a$ the \emph{low-pass filter}, and filters $\{b^1,\cdots,b^r\}$ the \emph{high-pass filters}. The former distills approximate information from the input signal, and the latter reserves more detailed information and together with noise. 
The associated generating functions and filter bank satisfy the relationship
\begin{equation}
    \label{relations}
    \widehat{\alpha}(2\xi)=\widehat{a}(\xi) \widehat{\alpha}(\xi),\quad
    \widehat{\beta^{n}}(2 \xi)=\widehat{b^{n}}(\xi) \widehat{\alpha}(\xi),
\end{equation}
where $n=1, \ldots, r,$ and $\xi \in \mathbb{R}$.

To discretize the continuous needlets with zero numerical error, we utilize \textit{Polynomial-exact Quadrature Rule} \cite{wang2017fully} that are generated by the tensor product of the Gauss-Legendre nodes on the interval $[-1,1]$ and equi-spaced nodes in the dimension with non-equal weights, such as longitude on sphere. Let $v_{j,k}$ represent low-pass coefficients, and $w^n_{j,k}$ represent high-pass coefficients of the signal function $f$, where $k=0, \ldots, N_{j+1}$  and $j \geq J$, $N_j$ is the number of sampling points at scale $j$. The low-pass (or high-pass) coefficients are defined by the inner products of low-pass (or high-pass) needlets and $f$. In practice, we calculate the coefficients in the Fourier domain for fast computation by
\begin{equation}
\label{freq_vw_define_1} 
    \widehat{v}_{j, \ell}=\widehat{f}_{\ell} \overline{\widehat{\alpha}\left(\frac{\lambda_{\ell}}{2^{j}}\right)},\quad \widehat{w}_{j-1, \ell}^{n}=\widehat{f}_{\ell} \overline{\widehat{\beta^{n}}\left(\frac{\lambda_{\ell}}{2^{j-1}}\right)}.
\end{equation}
We denote $\widehat{f}_{\ell}$ as the generalized Fourier coefficients of $f$ at degree $\ell$.
More details about the filter bank and construction of needlets on $\mathbb{S}^2$ and $\mathrm{SO(3)}$ are given in Appendix~\ref{needlet}.

\subsection{Spherical Needlet Convolution}
\label{paragraph_convolution}
The \emph{spherical needlet convolution} on $\mathcal{M}$ is defined by
\begin{equation}
\label{eq:NESconv_definition}
    [\phi\star f](R) = \langle L_R\phi,f\rangle= \int_{\mathcal{M}}\phi(R^{-1}x)f(x)dx,
\end{equation}
where $f$ is a signal, $\phi$ is a learnable locally supported filter, $L_R\phi(x) = \phi(R^{-1}x)$, and $\mathcal{M}$ represents $\mathbb{S}^2$ or $\mathrm{SO(3)}$. 
The constructed needlet convolution is rotation equivariant. Formally, a neural network (i.e., a function on $\mathcal{M}$) $\Phi$ is said \emph{rotation equivariant} if for an arbitrary rotation operator $L_R$, there exists an operator $T_R$ such that
$\Phi \circ L_R = T_R\circ \Phi$.
A rotation equivariant neural network provides more efficient and accurate prediction with theoretical support, which properties are desired for rotatable signals.
It is provable that the convolution in (\ref{eq:NESconv_definition}) satisfies the Fourier theorem, i.e., $[\widehat{\phi\star f}]_\ell = \widehat{f}_\ell\cdot\widehat{\phi}_{\ell}^\dagger$, where $\dagger$ denotes the conjugate transpose and $\ell$ is the degree parameter. The operation $\cdot$ is matrix multiplication for $\mathrm{SO(3)}$ and outer product for $\mathbb{S}^2$.

The formulation in (\ref{eq:NESconv_definition}) has been adopted by  
Spherical CNN \citep{cohen2018spherical}, which induces convolution on Fourier coefficients. We define the convolution using needlet coefficients of a spherical signal.
We construct the needlet coefficients with the needlet system defined in Section~\ref{MRA}. We take $n=2$ and get $\{\widehat{v}_{1,\ell}\}_{\ell=1}^{\Lambda_{J_0}}$, $\{\widehat{w}_{1,\ell}^{1}\}_{\ell=1}^{\Lambda_{J_1}}$ and $\{\widehat{w}_{1,\ell}^{2}\}_{\ell=1}^{\Lambda_{J_1}}$ for a low-pass and two high-pass needlet coefficients, where $\Lambda_j$ denotes sequence length of Fourier series of quadrature rule sampling points at scale $j$, and $J_0, J_1$ are the scale of low pass and high pass respectively.

These needlet coefficients can be used to reconstruct the Fourier coefficients $\hat{f}$ of signal $f$ of degree $\ell$.
We denote this relation as $\begin{bmatrix}\widehat{v}_{1,\ell},\widehat{w}_{1,\ell}^{1}, \widehat{w}_{1,\ell}^{2}\end{bmatrix}^\top \asymp \widehat{f}_\ell$, where $\asymp$ means formal equivalence. We hereby establish formally an equivalent expression of $[\widehat{\phi\star f}]_\ell$ with multi-scale information and rotation equivariance:
\begin{align*}
\begin{bmatrix}
    \widehat{\phi_1}_\ell\\\widehat{\phi_2}_\ell\\\widehat{\phi_3}_\ell
\end{bmatrix}
\odot\widehat{f}_{\ell} &\asymp
\begin{bmatrix}
    \widehat{\phi_1}_\ell\\\widehat{\phi_2}_\ell\\\widehat{\phi_3}_\ell
\end{bmatrix}
\odot 
\begin{bmatrix}
    \widehat{v}_{1,\ell}\\\widehat{w}_{1,\ell}^{1}\\ \widehat{w}_{1,\ell}^{2}
\end{bmatrix}
= 
\begin{bmatrix}
    \widehat{\phi_1}_\ell\cdot\widehat{f}_\ell\overline{\widehat{\alpha}\left(\frac{\lambda_{\ell}}{2^{J_0}}\right)}\\
    \widehat{\phi_2}_\ell\cdot\widehat{f}_\ell\overline{\widehat{\beta^1}\left(\frac{\lambda_{\ell}}{2^{J_0}}\right)}\\
    \widehat{\phi}_{3,\ell}\cdot\widehat{f}_\ell\overline{\widehat{\beta^2}\left(\frac{\lambda_{\ell}}{2^{J_0}}\right)}
\end{bmatrix} \\[1mm]
&= 
\begin{bmatrix}
    [\widehat{\phi_1\star f}]_\ell\overline{\widehat{\alpha}\left(\frac{\lambda_{\ell}}{2^{J_0}}\right)}\\
    [\widehat{\phi_2\star f}]_\ell\overline{\widehat{\beta^1}\left(\frac{\lambda_{\ell}}{2^{J_0}}\right)}\\
    [\widehat{\phi_3\star f}]_\ell\overline{\widehat{\beta^2}\left(\frac{\lambda_{\ell}}{2^{J_0}}\right)}
\end{bmatrix} 
\asymp[\widehat{\phi\star f}]_\ell.
\end{align*}
Here $\widehat{\phi_i}_\ell$ $(i=1,2,3)$ are three learnable filters defined in the frequency domain and $\odot$ is the Hadamard product.

\subsection{Rotation equivariance Error}
\label{paragraph_shrinkage}
\paragraph{Shrinkage Function}
One potential drawback of the spherical CNNs comes from the non-linear activation in each layer. The Fourier transforms introduce redundancy to feature representation in the frequency domain, which results in a heavy computational cost. To best preserve rotation equivariance at a reduced computational complexity, we employ a non-linear activation directly in the frequency domain with a small rotation equivariance error. Similar to UFGConv~\citep{zheng2021framelets}, we cut off the high-pass coefficients $x$ in the frequency domain by a shrinkage thresholding, i.e., 
\begin{equation*}
    \textup{Shrinkage}(x)=\textup{sgn}(x)(|x|-\lambda)_{+}, \forall x\in \mathbb{R},
\end{equation*}
with the threshold value $\lambda = \sigma\sqrt{2\textup{log}(N)}/\sqrt{N}$ for $N$ coefficients. The hyperparameter $\sigma$ is an analogue to the noise level of the denoising model. Note that we do not cut off the low-pass framelet coefficients, as they distill important approximate information of input data. It is critical to offering an approximate rotation equivariance for the shrinkaged needlet convolution, as we discuss as follows.
\begin{theorem}
\label{error_thm}
Let $\mathbb{W}^s_p(\mathbb{S}^2)$ with $s> 2/p$ and $1\leq p \leq \infty$ be a Sobelev space embedded in $\mathbb{L}_p(\mathbb{S}^2)$. For $f \in \mathbb{W}^s_p(\mathbb{S}^2)$, $\phi$ is a filter, then the rotation equivariance error due to using the shrinkage function is defined as the maximum of the following over all $R\in\mathrm{SO(3)}$,
\begin{align*}
    &\mathcal{E}(f)=\\
    &\max_{R\in\mathrm{SO(3)}}\sum_{\ell=0}^B\left\| \textup{Shr}(\widehat{L_Rf\star \phi})_\ell^{(\textup{H})}-D^\ell(R)\textup{Shr}(\widehat{f\star\phi}_\ell^{(\textup{H})})\right\|^2,
\end{align*}
where $B$ is the bandwidth for spherical signal embedding depending on the specific quadrature rule used, \textup{Shr($\cdot$)} represents shrinkage function, superscript $(\textup{H})$ indicates the high-pass coefficients. Then, the approximate equivariance error for $f$ is
\begin{equation*}
    \mathcal{E}(f)\leq C 2^{-(J_0+1)s},
\end{equation*}
where $J_0$ is the scale of the low pass, $C$ is a constant depending only on $s$, $\phi$ and the Sobolev norm of $f$.
\end{theorem}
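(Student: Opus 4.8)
The plan is to isolate the single source of equivariance error. First I would note that, away from the shrinkage, the needlet convolution is exactly equivariant: combining the Fourier theorem $[\widehat{\phi\star f}]_\ell=\widehat{f}_\ell\cdot\widehat{\phi}_\ell^\dagger$ with the rotation rule $\widehat{L_Rf}_\ell=D^\ell(R)\widehat{f}_\ell$ gives $\widehat{L_Rf\star\phi}_\ell=D^\ell(R)\widehat{f\star\phi}_\ell$, and since the high-pass filter $\overline{\widehat{\beta}(\lambda_\ell/2^{J_0})}$ is a scalar on each degree-$\ell$ block it commutes with the Wigner matrix $D^\ell(R)$. Writing $u_\ell:=\widehat{f\star\phi}_\ell^{(\textup{H})}$, the error therefore collapses to the failure of the shrinkage to commute with rotation,
\begin{equation*}
\mathcal{E}(f)=\max_{R}\sum_{\ell=0}^B\left\|\textup{Shr}\!\left(D^\ell(R)u_\ell\right)-D^\ell(R)\,\textup{Shr}(u_\ell)\right\|^2 .
\end{equation*}

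Next I would control this commutator pointwise in $R$. Writing the soft threshold as $\textup{Shr}(x)=x-r(x)$ with clipped remainder $r(x)=\textup{sgn}(x)\min(|x|,\lambda)$, the identity $\textup{Shr}(Du)-D\,\textup{Shr}(u)=D\,r(u)-r(Du)$ holds, and the unitarity of $D^\ell(R)$ together with the two elementary bounds $|r(x)|\le|x|$ and $|r(x)|\le\lambda$ yields $\|\textup{Shr}(D^\ell(R)u_\ell)-D^\ell(R)\textup{Shr}(u_\ell)\|\le 2\|u_\ell\|$, uniformly in $R$. Summing and applying Parseval reduces everything to the energy of the high-pass band, $\mathcal{E}(f)\le 4\sum_{\ell=0}^B\|u_\ell\|^2=4\,\|(f\star\phi)^{(\textup{H})}\|_{\mathbb{L}_2}^2$. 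The point of this step is that the crude bound $2\|u_\ell\|$ deliberately discards the intractable dependence on $R$ and $\lambda$ and leaves a quantity governed purely by smoothness.

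The final and main step is a frequency-localization estimate: the high-pass needlet filter $\widehat{\beta}(\lambda_\ell/2^{J_0})$ is supported on a single dyadic band $\lambda_\ell\sim 2^{J_0}$, so $\|u_\ell\|^2\le\|\widehat{\beta}\|_\infty^2\,\|\widehat{\phi}_\ell\|^2\,\|\widehat{f}_\ell\|^2$ is nonzero only there. Bounding the fixed filter by $\|\widehat{\phi}_\ell\|\le C_\phi$ and extracting the worst-case Sobolev weight over the band, $\sup_{\lambda_\ell\sim 2^{J_0}}(1+\lambda_\ell)^{-s}\lesssim 2^{-(J_0+1)s}$, against the norm $\|f\|_{\mathbb{W}^s_p}^2\sim\sum_\ell(1+\lambda_\ell)^{s}\|\widehat{f}_\ell\|^2$ gives $\mathcal{E}(f)\le C\,2^{-(J_0+1)s}$ with $C=C(s,\phi,\|f\|_{\mathbb{W}^s_p})$, exactly as claimed.

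I expect the main obstacle to be this last estimate when $p\ne 2$: the clean Parseval/Sobolev-weight argument is an $\mathbb{L}_2$ statement, whereas the hypothesis lives in $\mathbb{W}^s_p$. I would bridge this using the embedding $\mathbb{W}^s_p(\mathbb{S}^2)\hookrightarrow C(\mathbb{S}^2)\hookrightarrow \mathbb{L}_2(\mathbb{S}^2)$, which is precisely where the assumption $s>2/p$ is needed, and invoke the needlet (Besov-space) characterization of smoothness to certify that the detail coefficients at scale $J_0$ decay at the rate $2^{-(J_0+1)s}$ dictated by the regularity of $f$; some care is required to confirm that the shrinkage, being a contraction, never inflates this decay.
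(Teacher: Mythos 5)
Your first two steps are essentially the paper's own proof. The paper likewise uses the Fourier theorem together with $\widehat{L_Rf}_\ell = D^\ell(R)\widehat{f}_\ell$ (and the fact that the high-pass multiplier $\overline{\widehat{\beta^n}(\lambda_\ell/2^{J_0})}$ is a scalar on each degree-$\ell$ block) to collapse the error to the shrinkage commutator $\sum_{\ell}\bigl\| \textup{Shr}(D^\ell(R)u_\ell)-D^\ell(R)\textup{Shr}(u_\ell)\bigr\|^2$ with $u_\ell=\widehat{f}_\ell^{(\textup{H})}\widehat{\phi}_\ell$, and then bounds each term by a multiple of $\|u_\ell\|^2$ using the contraction property $|\textup{Shr}(x)|\le |x|$ and unitarity of $D^\ell(R)$. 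Your bookkeeping is in fact tidier: writing $\textup{Shr}=\mathrm{id}-r$ and keeping the constant $4$ is cleaner than the paper's triangle-inequality step, which silently uses $(a+b)^2\le a^2+b^2$ where $2a^2+2b^2$ is needed (harmless, since all constants are absorbed into $C$).

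Where you genuinely diverge is the final estimate, and that is also where your proposal is thinnest. The paper does not use a weighted-$\ell^2$ description of the Sobolev norm at all; it cites the discrete needlet approximation theorem (Wang and Sloan 2017, Theorem 3.12), valid for $f\in\mathbb{W}^s_p(\mathbb{S}^2)$ with $1\le p\le\infty$, to get $\|f-f_J\|\le C_1 2^{-Js}$ and $\|f-f_{J_0}^{(\textup{L})}\|\le C_2 2^{-J_0 s}$, and then bounds the high-pass energy $\sum_{\ell\ge 2^{J_0+1}}\|\widehat{f}_\ell^{(\textup{H})}\|^2\le C2^{-(J_0+1)s}$ via Parseval. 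Your dyadic-band argument, $\sup_{\lambda_\ell\sim 2^{J_0+1}}(1+\lambda_\ell)^{-s}$ against $\sum_\ell(1+\lambda_\ell)^{s}\|\widehat{f}_\ell\|^2$, proves the same bound more directly but, as you note, only when that weighted-sum characterization holds, i.e.\ $p=2$ (or $p>2$, via $\mathbb{W}^s_p\hookrightarrow\mathbb{W}^s_2$ on the finite-measure sphere). For $1\le p<2$ your proposed bridge does not close the gap as stated: the embedding $\mathbb{W}^s_p\hookrightarrow C(\mathbb{S}^2)\hookrightarrow L_2(\mathbb{S}^2)$ certifies that the Parseval sums are finite but carries no rate, and the quantitative Sobolev embedding $\mathbb{W}^s_p\hookrightarrow\mathbb{W}^{t}_2$ costs $t=s-2/p+1<s$, degrading the exponent. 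The ``needlet (Besov-space) characterization'' you gesture at in your last sentence is therefore not an optional refinement; it is the entire content of the step, and in concrete form it is exactly the $L_p$ approximation theorem the paper cites. (To be fair, the paper is itself cavalier on this point: it applies Parseval to quantities it has only bounded in $L_p$.) If you replace your final paragraph with that citation, your proof and the paper's become the same proof.
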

The shrinkage mechanism thus introduces a stable rotation equivariance error. 
The condition in Theorem~\ref{error_thm} $s>2/p$ indicates that each function of $\mathbb{W}^s_p(\mathbb{S}^2)$ has a representation in the continuous function space on $\mathbb{S}^2$. Then, the numerical computation for $f$ makes sense.
The proof of Theorem~\ref{error_thm} is given in Appendix~\ref{bound}.

\paragraph{Pooling Operator} 
We also establish a spectral pooling in the frequency domain to circumvent repeated Fourier transforms. A spectral pooling removes coefficients with degree larger than $\ell/2$ for the spectral representation $\hat{f} = [\hat{f}_{0},\hat{f}_{1},\cdots,\hat{f}_{\ell}]$. We prove that the spectral pooling operator is rotation equivariant (see Appendix~\ref{Pooling_proof}). 

\paragraph{Network Architecture}
The framework is scalable to any application scenarios that can be represented by spherical signals. We illustrate the overall workflow of our proposed Needlet Spherical CNNs in Figure~\ref{fig:needlet} with application scenario for bio-molecular prediction, where the input is a set of spherical signals centered at atoms of a molecule. 
The spherical data is sampled on the points of a polynomial-exact quadrature rule. Based on the rule, we implement non-equispaced fast Fourier transforms (NFFTs) with predefined weights. The Fourier representations are sent through an $\mathbb{S}^2$-needlet convolution to $\mathrm{SO(3)}$. A number of rotation equivariant $\mathrm{SO(3)}$-needlet convolutions are repeatedly conducted. Inside each needlet convolution, we use a wavelet shrinkage to threshold small high-pass coefficients, following a pooling operator to downsize the representation. The final output of $\mathrm{SO(3)}$-needlet convolution is handled by the inverse NFFT (iNFFT) to feed into a downstream predictor.

\section{Experiments}
The main advantage of our model is the property of equivariance to $\mathrm{SO(3)}$ transforms with multiscale representation for complicated real-world application. This section validates the model with three experiments.
Our models are trained on 24G NVIDIA GeForce RTX 3090 Ti GPUs. The hyperparameters are obtained by grid search. Adam \cite{kingma2014adam} is used as our optimizer.

\subsection{Local MNIST Classification}
\begin{figure}[t]
    \centering
    \includegraphics[width = 0.45\textwidth]{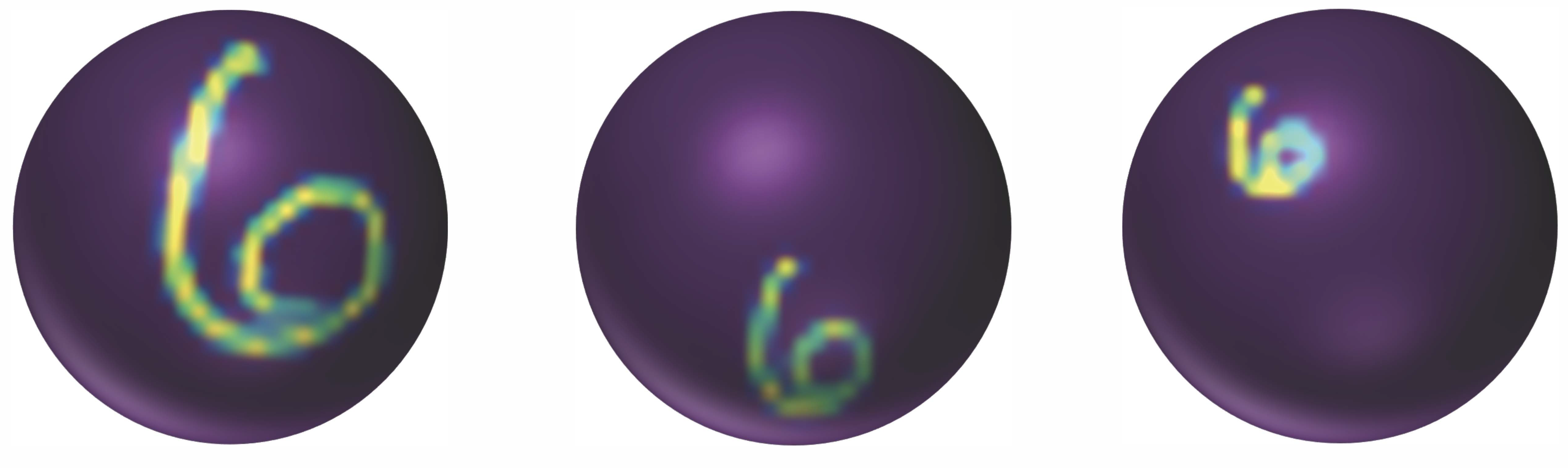}
    \caption{Illustration of a projected MNIST digit onto the sphere with $3$ different downscale ratios ($10\%$, $50\%$, $90\%$, left to right). The higher downscale indicates that the size of the digit is smaller on the sphere, which increases the difficulty of the model feature extraction where more detailed information needs captured.}
    \label{Local Mnist}
\end{figure}

\begin{table}
    \centering
    \setlength{\tabcolsep}{1.6mm}
    \caption{Test accuracy on spherical MNIST with varying scales. 
    }
    \vspace{1mm}
    \begin{tabular}{l|rrrrr}
    \toprule
    Downscale Ratio & 10\%   & 30\%   & 50\%   & 70\%   & 90\%   \\ 
    \midrule
    Spherical CNN & 94.99 & 92.17 & 86.92 & 83.73 & 78.71 \\
    NES & \textbf{97.84} & \textbf{97.30} & \textbf{96.74}&  \textbf{95.21} & \textbf{92.66}\\ 
    \bottomrule
    \end{tabular}
    \label{mnist_performance}
\end{table}

\paragraph{Dataset}
The first experiment evaluates the effectiveness of the needlet convolution neural network in capturing high-frequency information. We follow \citet{cohen2018spherical} and use a modified spherical MNIST classification dataset, where the images are projected onto a sphere to establish rotated training and test sets. Here the samples of the training set are all rotated by the same rotation while those in the test set are rotated by another rotation. We downscale the original MNIST images into five different resolutions and then project them onto a scalable area of the sphere. 

\paragraph{Setup}
Our model is compared with Spherical CNN \cite{cohen2018spherical}. We adopt the same architecture $\mathbb{S}^2$conv-ReLU-$\mathrm{SO(3)}$conv-ReLU-FC-softmax, with bandwidth $L$ = 30, 10, 6 and $k$ = 20, 40, 10 channels: when it comes to our model, we replace $\mathbb{S}^2$conv and $\mathrm{SO(3)}$conv with $\mathbb{S}^2$-needlet convolution and $\mathrm{SO(3)}$-needlet convolution, bandwidth $L$ = 30, 10, 6 and $k$ = 20, 40, 10 channels. We select the batch size of $64$ and learning rate  $0.001$.

\paragraph{Results}
The test accuracy for spherical MNIST is presented in Table~\ref{mnist_performance}. 
To test the rotation equivariance of the models, we rotate the training dataset and test dataset with two different rotations. That is, the input training data are all rotated with a same rotation in SO(3), and all test data are rotated by another rotation. We also test on downscaled datasets with various scales: the higher the scale, the less size of the spherical digit is on the sphere, and the signal is more localized.
Table~\ref{mnist_performance} indicates that both models keep high test accuracy with both training and test data rotated. We can observe that our model consistently achieves high accuracy on datasets for different downscale ratios. Especially for the high ratio, the digit is concentrated at a small region, and the model is required to capture more details of the spherical data.
In contrast, Spherical CNN has poorer performance with higher downscale. It demonstrates a reliable performance of our model in effectively distilling detailed and local features while maintaining rotation equivariance of the needlet convolutional layer.

\subsection{Molecular Property Prediction}
\begin{figure}[t]
    \centering
    \includegraphics[width=0.25\textwidth]{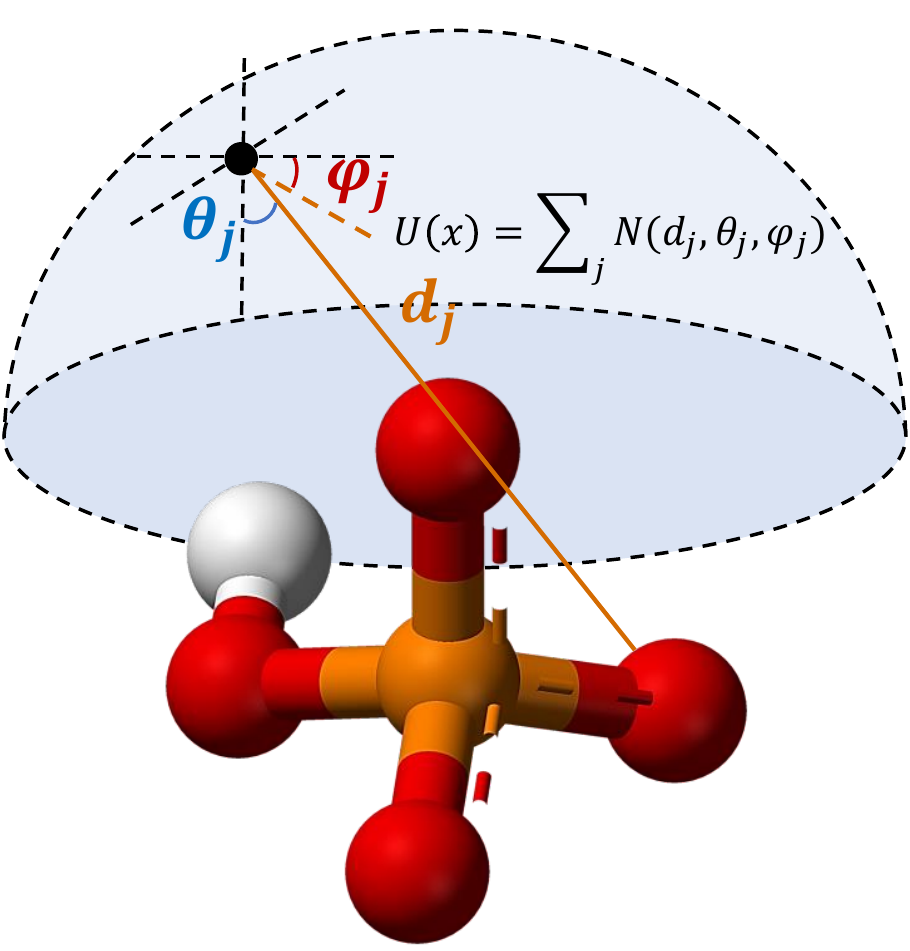}
    \vspace{-2mm}
    \caption{An illustration for computing the spherical signal of a molecule. We aggregate the information of each atom in the molecule with relative distance, polar angle and azimuthal angle.}
    \label{fig:md_sphere}
    \vspace{-0.2cm}
\end{figure}

\begin{table}[t]
\centering
\caption{Test RMSE of atomic energy on \textbf{QM7}. 
$\dagger$ indicates the method is rotation equivariant.}
\label{qm7_results}
\vspace{2mm}
\resizebox{0.85\linewidth}{!}{
\begin{tabular}{lcc}
\toprule
Method & RMSE &Params\\ \midrule
\textsc{MLP/Sorted CM} & 16.06 & - \\
\textsc{MLP/Random CM} & 5.96 & - \\
\textsc{GCN} & 7.32 $\pm$ 0.23&0.8M\\
\textsc{Spherical CNN}$^{\dagger}$   & 8.47&1.4M\\
\textsc{Clebsch–Gordan}$^{\dagger}$  & 7.97& $\geq$1.1M \\
\midrule
\textsc{NES}$^{\dagger}$ (Ours)&\textbf{7.21} $\pm$ 0.46  &0.9M \\\bottomrule
\end{tabular}
}
\end{table}

\begin{table}[ht]
    \centering
    \setlength{\tabcolsep}{0.31mm}
    \caption{Test MAE of forces in $\mathrm{meV}/\AA$ on \textbf{MD17}. 
    }
    \label{md17_performance}
    \vspace{1mm}
    \resizebox{0.9\linewidth}{!}{
    \begin{tabular}{lccccc}
    \toprule
         Molecule&sGDML&SchNet&DimeNet&SphereNet&NES\\
    \midrule
         Aspirin&29.5&58.5&21.6&18.6&\textbf{15.2}\\
         Ethanol&14.3&16.9&10.0&\textbf{9.0}&9.2\\
         Malonaldehyde&17.8&28.6&16.6&14.7&\textbf{13.6}\\
         Naphthalene&4.8&25.2&9.3&7.7&\textbf{3.5}\\
         Salicylic&\textbf{12.1}&36.9&16.2&15.6&14.2\\
         Toluene&\textbf{6.1}&24.7&9.4&6.7&\textbf{6.1}\\
         Uracil&\textbf{10.4}&24.3&13.1&11.6&10.8\\
    \bottomrule
    \end{tabular}
    }
\end{table}

\paragraph{Datasets}
The second experiment predicts molecular property over two widely used datasets (\textbf{QM7} and \textbf{MD17} \cite{chmiela2017machine} ) to evaluate the model’s expressivity to bio-molecular simulation.
\textbf{QM7} contains $7,165$ molecules. Each molecule contains at most $N=23$ atoms of $T=5$ types (H, C, N, O, S), which is to regress over the atomic energy of molecules given the corresponding position $p_i$ and charges $z_i$ of each atom $i$. 
\textbf{MD17} predicts the energies and forces at the atomic level for several organic molecules with up to $21$ atoms and four chemical elements, using the molecular dynamics trajectories.

We follow \citet{rupp2012fast} to generate spherical signals for every molecule. We define a sphere $S_i$ centered at $p_i$ for each atom $i$ and define the potential functions as 
$U_z(x) = \sum_{j\neq i,z_j=z}\frac{z_i\cdot z}{\|x-p_i\|}$, 
where $z$ is the charge of the atom, and $x$ is taken from $\mathbb{S}^2$. For every molecule, $N$ spherical signals are produced in $T$ channels. We use the Gauss-Legendre rule to discretize the continuous functions on the sphere with $L=20$ and create a sparse $N\times T\times (2L+1)\times(L+1)$ tensor as the input signal representation. 
For \textbf{QM7}, we generalize the \textit{Coulomb matrix} ($C\in \mathbb{R}^{N\times N}$) and obtain $23$ spherical signals for every molecule. 
For \textbf{MD17}, we create $N$ spherical signals that are centered at the positions of each atom for every sample, where $N$ is the number of atoms in the molecule. For the atom $i$, we define a corresponding spherical signal $U_i(x)$, where $x$ is taken from the sphere by the Gauss-Legendre sampling method. The relative position of each atom to $x$ is calculated with the absolute Cartesian coordinates of atoms provided by \textbf{MD17}. The spherical signal $U_i$ is defined as 
    $U_i(x) = \sum_{j=1}^{N}\mathcal{N}(d_j,\theta_j,\varphi_j)$,
where $(d_j,\theta_j,\varphi_j)$ is the position of atom $j$ relative to $x$. The $d_j$, $\theta_j$, and $\varphi_j$ denote the radial distance, polar angle, and the azimuthal angle respectively (see Figure~\ref{fig:md_sphere}). Different with \textbf{QM7}, \textbf{MD17} does not have a \textit{Coulomb matrix}. The number of spherical signals $N$ can thus be taken from a neural network or a mathematical operator to extract effective features with the relative positions. Here we choose the first approach of neural networks to adaptively learn feature. We fine-tune the hyperparameters individually for every type of molecules on the validation sets with $1,000$ samples for each type.

\paragraph{Setup} 
The bandwidth $L$ is from $20, 20, 10, 10, 5$ to $5$ in the final block and the feature dimension is from $5, 5, 8, 16, 32$ to $64$. The hyperparameter $\sigma$ is taken as $0.001$ for shrinkage. We run $10$ epochs for \textbf{QM7} with a batch size of $32$ and a learning rate of $5e-4$. For \textbf{MD17}, we choose a batch size of $32$ and a learning rate of $2e-4$. We run the model for $1,000$ epochs.

\paragraph{Results} 
We report the experimental results of \textbf{QM7} and \textbf{MD17} respectively in Tables \ref{qm7_results}-\ref{md17_performance}. 
For \textbf{QM7}, we compare the root mean square error (RMSE) of our proposed NES with MLP/Random CM, MLP/Sorted CM \cite{montavon2012learning}, GCN \cite{kipf2016semi}, Spherical CNN \cite{cohen2018spherical} and Clebsch-Gordan Net \cite{kondor2018clebsch}. The scores are averaged over $10$ trials with standard deviation. Our model uses approximately $0.9$ million parameters to achieve the lowest RMSE at $7.21 \pm 0.46$ among all rotation equivariant models. Our model enjoys the advantages of both a smaller number of parameters and a lower prediction error, owing to the incorporation of efficient computation and multiscale analysis architecture. 
In \textbf{MD17} task, we focus on atomic forces and measure the mean absolute error (MAE) averaged over all samples and atoms. 
SchNet \cite{schutt2017schnet} and DimeNet \cite{chmiela2018towards} are 3D graph models that incorporate relative distance information. SphereNet \cite{liu2021spherical} is a 3D graph model with physically-based representations of geometric information. Most of previous state-of-the-art models are graph-based models with hand-engineered features or expert knowledge. Instead, our model utilizes the adaptive learning of input features and incorporates multiscale analysis to improve the representation ability. Results show that the proposed model outperforms baseline models with strong performance and better generalization in molecular simulation, due to the rotation equivariance. NES achieves better performance on four types of molecules. Compared to NES, sGDML \cite{chmiela2018towards} is a kind of kernel method, which relies on human expertise and extra annotation, thus suffering from poor generalization to a new type of molecule.

\subsection{Delensing Cosmic Microwave Background}
\begin{figure}
    \centering
    \includegraphics[width=0.85\linewidth]{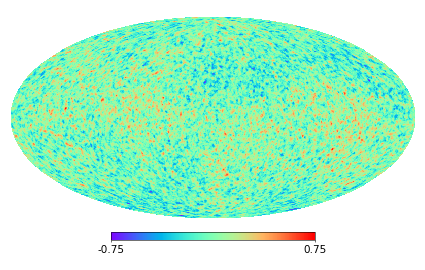}
    \vspace{-4mm}
    \caption{B-CMB multipoles unlensed map with tensor-to-scalar ratio $r =0.2$, which is one of the main constraints in detecting the Primordial Gravitational Wave Background. We color the map with the intensity values to predict.}
    \label{B_mode}
\end{figure}

The existence of a stochastic Primordial Gravitational Wave Background (PGWB) is a common prediction in the majority of inflationary models. It is formed when microscopic quantum fluctuations of the metric were stretched up to super-horizon scales by the sudden expansion of space-time that occurred during inflation \cite{caprini2018cosmological}. Since it has been able to free-stream from time as early as (possibly) the Planck time, PGWB has the potential of becoming one of the most powerful cosmological probes. The information about phase transitions and particle creation/annihilation may have taken place in the early universe, which allows new independent measurements of cosmological parameters. In order to discover PGWB, we need to constrain some parameters, such as the ratio between tensor and scalar perturbations $r=\mathcal{P}_t(k)/\mathcal{P}_s(k)$. Such a parameter relies on a high signal-to-noise ratio (SNR) reconstruction of the lensing potential, i.e., the projected weighted gravitational potential along the line-of-sight between us and the CMB. Photons in the CMB are deflected by the intervening mass distributions when they travel to us. The lensing effect distorts the recombination of the CMB and interferes with our ability to constrain early universe physics. Therefore, removing the lensing effect from observed data is critical to decoding early-universe physics. In this experiment, we use NES convolution to reconstruct the unlensed B-mode (Figure~\ref{B_mode}) component of the CMB polarization from the lensed $Q, U$ maps that are orthonormal bases corresponding to Stokes parameters.

\paragraph{Dataset}
Spherical CNN and NES are used to reconstruct the unlensed $B$ map from the lensed $Q, U$ maps. We simulate $10,000$ lensed $Q, U$ maps and B-CMB multipoles unlensed map with $N_{\rm Side}=64$. Then, transforming the original sample rules from HEALPix to Gauss-Legendre tensor product rules with the bandwidth $L = 128$ by taking the average of the four nearest HEALPix points in Gauss-Legendre coordinates. We split the whole dataset into 80\% training, 10\% validation, and 10\% test sets.

\paragraph{Setup}
We follow the U-Net architecture from \cite{caldeira2019deepcmb} and replace the standard image convolution with Spherical CNN and NES. In the encoder, the bandwidth is $128, 64, 32, 16$ and the feature dimensions are $2, 16, 32, 64$ respectively in each block. In the decoder, the bandwidth increases from $16, 32, 64$ to $128$. The encoder layers are skip-connected to decoder layers, which is consistent with the standard architecture of U-Net. We sum the mean squared error (MSE) in the pixel domain and in power spectrum as the loss function. We use a batch size of $16$, a learning rate of $5e-5$, and weight decay of $3e-4$ to train the model.

\begin{figure}[t]
    \centering
    \includegraphics[width=0.85\linewidth
    ]{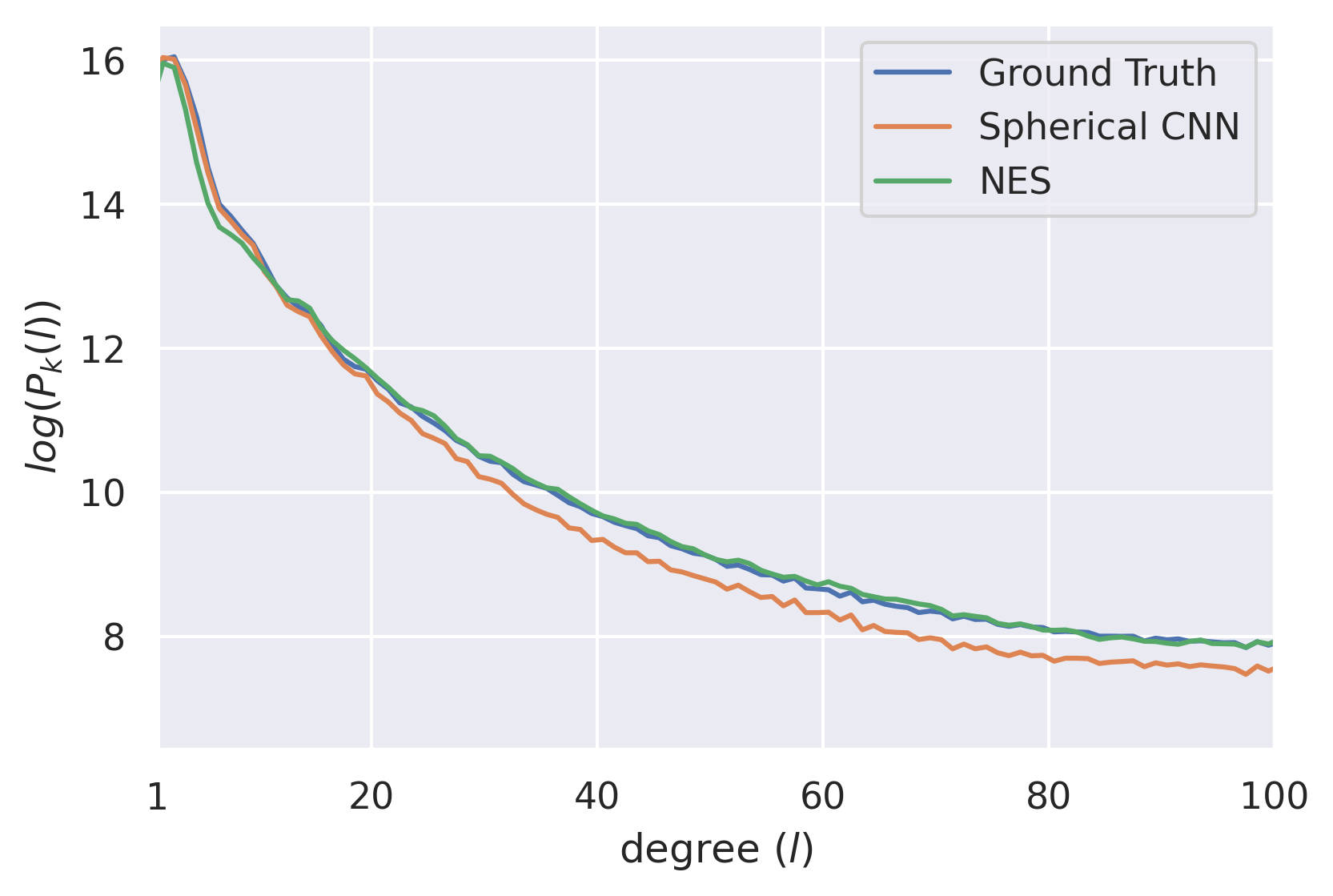}
    \vspace{-2mm}
    \caption{The power spectrum of unlensed B-map predicted by NES compared with Spherical CNNs and the ground truth. }
    \label{Power Spectrum}
\end{figure}

\paragraph{Result}
Figure~\ref{Power Spectrum} compares the power spectrum of two models' predicted B-unlensed map with ground truth. We can see Spherical CNNs always underestimates the ground truth when degree $l$ is larger than $20$. NES can capture more high-frequency information of data in each block during training. The estimated power spectrum of the model is consistent with the ground truth even at a large degree of $l\geq 100$.

\section{Conclusion}
We develop a Needlet approximate Equivariance Spherical CNN using multiscale representation systems on the sphere and rotation group. The needlet convolution inherits the multiresolution analysis ability from needlet transforms and allows rotation invariance in network propagation. Wavelet shrinkage is used as a network activation to filter out the high-pass redundancy, which helps improve the robustness of the network. The shrinkage brings controllable equivariance error for the needlet CNN, which is small when the scale is high. Empirical study shows the proposed model can achieve excellent performance on real scientific problems.

\section*{Acknowledgements}
YW acknowledges support from the Shanghai Municipal Science and Technology Major Project (2021SHZDZX0102) and Huawei-SJTU ExploreX Funding (SD6040004/034). We are also grateful to the anonymous reviewers for their feedback.

\bibliographystyle{icml2022}
\bibliography{main}

\appendix
\section{Generalized Fourier Transform}\label{GFT}
Denoting the manifold ($\mathbb{S}^2$ or $\mathrm{SO(3)}$) by $\mathcal{M}$. The basis functions are spherical harmonics ($Y_m^\ell(R)$) and Wigner D-functions ($D_{mn}^\ell(R)$) for $\mathbb{S}^2$ and $\mathrm{SO(3)}$ respectively. Denote the basis functions as $u_\ell$. We can write the generalized Fourier transform of a function $f: \mathcal{M}\rightarrow \mathbb{R}$ with quadrature rule sampling at scale $j$ as 
\begin{align*}
  \widehat{f}_{\ell} =\left \langle f,u_\ell \right \rangle &= \int_\mathcal{M}f(\bx)\overline{u_\ell(\bx)}\dd\bx\\
  &= \sum_{k=0}^{N_j}f(\boldsymbol{x}_{j,k})\sqrt{\omega_{j,k}}\overline{u_\ell(\boldsymbol{x}_{j,k})}.
\end{align*}
The inverse Fourier transforms on $\mathbb{S}^2$ and $\mathrm{SO(3)}$ are as follows,
\begin{align*}
    f(R) &= [\mathcal{F}^{-1}\hat{f}](R)=\sum_{\ell=0}^\infty(2\ell+1)\sum_{m=-\ell}^{\ell}\hat{f}^\ell_{m}Y_m^\ell(R)\\
    f(R) &= [\mathcal{F}^{-1}\hat{f}](R)=\sum_{\ell=0}^\infty(2\ell+1)\sum_{m=-\ell}^{\ell}\sum_{n=-\ell}^{\ell}\hat{f}^\ell_{mn}D_{mn}^\ell(R).
\end{align*}
Let $(\alpha,\beta )$ with $\alpha\in [0,2\pi]$ and $\beta\in[0,\pi]$ be the spherical polar coordinates for $x\in \mathbb{S}^2$. The spherical harmonics can be explicitly written as 
$$Y_{\ell, m}(\alpha, \beta):=\sqrt{\frac{2 \ell+1}{4 \pi} \frac{(\ell-m) !}{(\ell+m) !}} P_{\ell}^{(m)}(\cos \beta) e^{i m \alpha},$$
where $P_{\ell}^{(m)}(t)$ is the associated Legendre Polynomial of degree $\ell$ and order $m$, $m=-\ell, \ldots, \ell, \ell=0,1, \ldots$. We use the ZYZ Euler parameterization for $\mathrm{SO(3)}$. An element $R\in \mathrm{SO(3)}$ can be parameterized by $R(\alpha, \beta,\gamma)$ with $\alpha\in [0,2\pi]$, $\beta\in [0,\pi]$ and $\gamma\in [0,2\pi]$.
There exists a general relationship between Wigner D-functions and spherical harmonics:
$$D_{m s}^{\ell}(\alpha, \beta,-\gamma)=(-1)^{s} \sqrt{\frac{4 \pi}{2 \ell+1}}_{s} Y_{\ell}^{m}(\beta, \alpha) e^{i s \gamma}.$$

\section{Needlets on $\mathbb{S}^2$ and $\mathrm{SO(3)}$}\label{needlet}
Tight framelets on manifold $\mathcal{M}$ is defined by a filter bank (a set of complex-valued filters) $\eta:=\left\{a ; b^{1}, \ldots, b^{r}\right\} \subset l_{1}(\mathbb{Z}):=\left\{h=\left\{h_{k}\right\}_{k \in \mathbb{Z}} \subset \mathbb{C}: \sum_{k \in \mathbb{Z}}\left|h_{k}\right|<\infty\right\}$ and a set of associated scaling functions, $\Psi=\{\alpha;\beta^1,\cdots,\beta^r\} \subset L_1(\mathbb{R})$, which is a set of complex-valued functions on the real axis satisfying the following equations, for $r=1,\ldots, n, \xi \in \mathbb{R}$,
$$\widehat{\alpha}(2\xi)=\widehat{a}(\xi) \widehat{\alpha}(\xi),\quad
\widehat{\beta^{n}}(2 \xi)=\widehat{b^{n}}(\xi) \widehat{\alpha}(\xi).$$
Here $a(\cdot)$ is called the $\emph{low-pass}$ filter and $b^{(r)}(\cdot)$ are $\emph{high-pass}$ filters. Let $\left\{\left(\boldsymbol{u}_{\ell=1}^{\infty}, \lambda_{\ell} \right)\right\}_{\ell}$ be the eigenvalue and eigenvector pairs for $L_2(\mathcal{M})$. The framelets at scale level $j=1,\cdots,J$ for manifold $\mathcal{M}$ are generated with the above scaling functions and orthonormal eigen-pairs as 
\begin{equation*}
\begin{aligned}
\boldsymbol{\varphi}_{j,\boldsymbol{y}}(\boldsymbol{x}) &=\sum_{\ell=1}^{\Lambda_j} \widehat{\alpha}\left(\frac{\lambda_{\ell}}{2^{j}}\right) \overline{\boldsymbol{u}_{\ell}(\boldsymbol{y})} \boldsymbol{u}_{\ell}(\boldsymbol{x}) \\
\boldsymbol{\psi}_{j, \boldsymbol{y}}^{n}(\boldsymbol{x}) &=\sum_{\ell=1}^{\Lambda_j} \widehat{\beta^{(n)}}\left(\frac{\lambda_{\ell}}{2^{j}}\right) \overline{\boldsymbol{u}_{\ell}(\boldsymbol{y})} \boldsymbol{u}_{\ell}(\boldsymbol{x})
\end{aligned}
\end{equation*}
$\boldsymbol{\varphi}_{j,\boldsymbol{y}}(\boldsymbol{x})$ and $\{\boldsymbol{\psi}_{j, \boldsymbol{y}}^{n}(\boldsymbol{x})\}_{n=1}^r$ are low-pass and high -pass framelets at scale $j$ at point $\boldsymbol{y}\in\mathcal{M}$. $\Lambda_j$ is the bandwidth of scale level $j$ and $n=1,\cdots,r$ depending on the support of scaling functions $\widehat{\alpha}$ and $\widehat{\beta^{(n)}}$. 

Needlets are a type of framelets on the sphere $\mathrm{\mathbb{S}}^d$ associated with a quadrature rule and a specific filter bank. This type of framelets can be generalized to rotation group $\mathrm{SO(3)}$ with the same filter bank. For simplicity, we consider the filter bank $\boldsymbol{\eta}=\{a;b^1,b^2\}$ with two high-pass filters. We define the filter bank $\boldsymbol{\eta}=\{a;b^1,b^2\}$ by their Fourier series as follows.

\begin{equation}\label{filter_bank}
\begin{aligned}
\widehat{a}(\xi)&:=\left\{\begin{array}{ll}
1, & |\xi|<\frac{1}{8}, \\[1mm]
\cos \left(\frac{\pi}{2} \nu(8|\xi|-1)\right), & \frac{1}{8} \leqslant |\xi| \leqslant \frac{1}{4}, \\[1mm]
0, & \frac{1}{4}<|\xi| \leqslant \frac{1}{2},
\end{array}\right. \\[2mm]
\widehat{b^{1}}(\xi)&:=\left\{\begin{array}{ll}
0, & |\xi|<\frac{1}{8}, \\[1mm]
\sin \left(\frac{\pi}{2} \nu(8|\xi|-1)\right), & \frac{1}{8} \leqslant |\xi| \leqslant \frac{1}{4}, \\[1mm]
\cos \left(\frac{\pi}{2} \nu(4|\xi|-1)\right), & \frac{1}{4}<|\xi| \leqslant \frac{1}{2},
\end{array}\right. \\[2mm]
\widehat{b^{2}}(\xi)&:=\left\{\begin{array}{ll}
0, & |\xi|<\frac{1}{4} \\[1mm]
\sin \left(\frac{\pi}{2} \nu(4|\xi|-1)\right), & \frac{1}{4} \leqslant |\xi| \leqslant \frac{1}{2},
\end{array}\right.
\end{aligned}
\end{equation}

where $$
\nu(t):=\chi_{3}(t)^{2}=t^{4}\left(35-84 t+70 t^{2}-20 t^{3}\right), \quad t \in \mathbb{R}.
$$
It can be verified that 
$$|\widehat{a}(\xi)|^{2}+\left|\widehat{b^{1}}(\xi)\right|^{2}+\left|\widehat{b^{2}}(\xi)\right|^{2}=1 \quad \forall \xi \in[0,1 / 2].$$
Then, the associated needlets generators $\Psi=\{\alpha;\beta^1,\beta^2\}$ is explicitly given by 

\begin{equation}\label{needlet_gen}
\begin{aligned}
\widehat{\alpha}(\xi)&=\left\{\begin{array}{ll}
1, & |\xi|<\frac{1}{4}, \\
\cos \left(\frac{\pi}{2} \nu(4|\xi|-1)\right), & \frac{1}{4} \leqslant|\xi| \leqslant \frac{1}{2}, \\
0, & \text { else },
\end{array}\right. \\[2mm]
\widehat{\beta^{1}}(\xi)&=\left\{\begin{array}{ll}
\sin \left(\frac{\pi}{2} \nu(4|\xi|-1)\right), & \frac{1}{4} \leqslant|\xi|<\frac{1}{2}, \\[1mm]
\cos ^{2}\left(\frac{\pi}{2} \nu(2|\xi|-1)\right), & \frac{1}{2} \leqslant|\xi| \leqslant 1, \\[1mm]
0, & \text { else },
\end{array}\right. \\[2mm]
\widehat{\beta^{2}}(\xi)&=\left\{\begin{array}{ll}
0, & |\xi|<\frac{1}{2} \\[1mm]
\frac{1}{2}\sin \left(\pi\nu(2|\xi|-2)\right), & \frac{1}{2} \leqslant|\xi| \leqslant 1 \\[1mm]
0, & \text { else.}
\end{array}\right.
\end{aligned}
\end{equation}

The framelet coefficients $v_{j,k}$ represent low-pass coefficients, and $w^n_{j,k}$ represent high-pass coefficients. They are defined as $\langle\boldsymbol{\varphi}_{j,k},f \rangle$ and $\langle\boldsymbol{\psi}_{j,k}^{n},f\rangle$ respectively. We can calculate the coefficients in the Fourier space, as shown in Eq.~(\ref{freq_vw_define}) below.
 \begin{equation}
\begin{aligned}
    v_{j, k}&=\sum_{\ell=0}^{\Lambda_{j}} \widehat{f_{\ell}} \overline{\widehat{\alpha}\left(\frac{\lambda_{\ell}}{2^{j}}\right)} \sqrt{\omega_{j, k}} u_{\ell}\left(\boldsymbol{x}_{j, k}\right)\\
    w_{j-1, k}^{n}&=\sum_{\ell=0}^{\Lambda_{j}} \widehat{f}_{\ell} \overline{\widehat{\beta^{n}}\left(\frac{\lambda_{\ell}}{2^{j-1}}\right)} \sqrt{\omega_{j, k}} u_{\ell}\left(\boldsymbol{x}_{j, k}\right)
\end{aligned}
\label{v&w}
\end{equation}
\begin{equation}
\widehat{v}_{j, \ell}=\widehat{f}_{\ell} \overline{\widehat{\alpha}\left(\frac{\lambda_{\ell}}{2^{j}}\right)}
,\quad \widehat{w}_{j-1, \ell}^{n}=\widehat{f}_{\ell} \overline{\widehat{\beta^{n}}\left(\frac{\lambda_{\ell}}{2^{j-1}}\right)}.
\label{freq_vw_define}
\end{equation}

\section{Polynomial-exact Quadrature Rule}
Let $Q_{N_{j}}=\left\{\left(\omega_{j, k}, \boldsymbol{x}_{j, k}\right) \in \mathbb{R} \times \mathcal{M}: k=0, \ldots, N_{j}\right\}$ be a polynomial-exact quadrature rule at scale $j$ with $N_j$ weights $\omega_{j,k}\in \mathbb{R}$ and $N_j$ points $\boldsymbol{x}_{j,k}\in\mathcal{M}$. We use Gauss-Legendre tensor product rule which is generated by the tensor product of the Gauss-Legendre nodes on the interval $[-1,1]$ and equi-spaced nodes in the other dimension with non-equal weights. By using polynomial-exact quadrature rule, the integral of polynomial of degree less than a certain $l$ yields zero numerical error. We use $Q_{N_j}$ and $Q_{N_{j+1}}$ to discretize the continuous framelets $\boldsymbol{\varphi}_{j,\boldsymbol{y}}$ and $\boldsymbol{\psi}_{j, \boldsymbol{y}}^{n} (n=1, \cdots, r)$  in Eq.~(\ref{framelet}) respectively as follows.
\begin{equation}
\begin{aligned}
\boldsymbol{\varphi}_{j, k}(\boldsymbol{x})&=\sqrt{\omega_{j, k}} \sum_{\ell=1}^{\Lambda_j} \widehat{\alpha}\left(\frac{\lambda_{\ell}}{2^{j}}\right) \overline{u_{\ell}\left(\boldsymbol{x}_{j, k}\right)} u_{\ell}(\boldsymbol{x}) \\
\boldsymbol{\psi}_{j, k}^{n}(\boldsymbol{x})&=\sqrt{\omega_{j+1, k}} \sum_{\ell=1}^{\Lambda_j} \widehat{\beta^{n}}\left(\frac{\lambda_{\ell}}{2^{j}}\right) \overline{u_{\ell}\left(\boldsymbol{x}_{j+1, k}\right)} u_{\ell}(\boldsymbol{x}).
\end{aligned}
\label{framelet}
\vspace{-0.2cm}
\end{equation}
The generalized Fourier coefficients $\widehat{f}_\ell$, $\ell=0,1,\cdots ,\Lambda_j$, are calculated as 
$$
  \widehat{f}_{\ell} 
  = \int_\mathcal{M}f(\bx)\overline{u_\ell(\bx)}\dd\bx= \sum_{k=0}^{N_j}f(\boldsymbol{x}_{j,k})\sqrt{\omega_{j,k}}\overline{u_\ell(\boldsymbol{x}_{j,k})}.
$$
For (\ref{framelet}) with scaling functions (\ref{needlet_gen}) when the support of the scaling functions is in $[-1,1]$, the quadrature rule is needed to set exact for degree $\leq 2^{j+1}$ at scale $j$.

The inverse generalized Fourier transform is defined by 
\begin{equation*}
    f(\boldsymbol{x}_{j,k}) = \sum_{\ell=1}^{\Lambda_j}\widehat{f}_{\ell}\sqrt{\omega_{j,k}}u_\ell(\boldsymbol{x}_{j,k}),\; k=0,1,\cdots,N_j
\end{equation*}
For $\mathbb{S}^2$ signal, basis functions $\{u_\ell\}_\ell$ are usually the spherical harmonics $Y_m^\ell: \mathbb{S}^2\rightarrow\mathbb{C}$ indexed by $l\geq 0$ and $-l\leq m\leq l$ . For $\mathrm{SO(3)}$ signal, basis functions $\{u_\ell\}_\ell$ are usually Wigner-D functions $D_{mn}^\ell:\mathrm{SO(3)}\rightarrow \mathbb{C}$ indexed by $l\geq 0$ and $-l\leq m,n \leq l$. More details about $Y_m^l$ and $D_{mn}^{\ell}$ are given in Appendix \ref{GFT}.

We let $v_{j,k}$ represent low-pass coefficients, and $w^n_{j,k}$ represent high-pass coefficients, which are defined as $\langle\boldsymbol{\varphi}_{j,k},f \rangle$ and $\langle\boldsymbol{\psi}_{j,k}^{n},f\rangle$ respectively. We can calculate the coefficients in the Fourier space:
\begin{equation}
\widehat{v}_{j, \ell}=\widehat{f}_{\ell} \overline{\widehat{\alpha}\left(\frac{\lambda_{\ell}}{2^{j}}\right)}
,\quad \widehat{w}_{j-1, \ell}^{n}=\widehat{f}_{\ell} \overline{\widehat{\beta^{n}}\left(\frac{\lambda_{\ell}}{2^{j-1}}\right)}.
\label{freq_vw_define_2}
\vspace{-0.2cm}
\end{equation}

\section{Conditions of Tightness of Needlet System}\label{tightness}
In the implementation, we need the discrete version of needlets. We get discrete needlets $\boldsymbol{\varphi}_{j,k}(\boldsymbol{x})$ and $\boldsymbol{\psi}_{j,k}^n(\boldsymbol{x})$ at scale $j$ with quadrature rule sampling \cite{wang2017fully}. Let $\mathcal{Q}=\{Q_{N_j}\}_{j\geqslant J}$, the set of quadrature rules at scale $j\geqslant J$, and define the needlet system as 
$$
\mathrm{FS}_{J}(\Psi, \mathcal{Q}):=\left\{\varphi_{J, k}\right\} \cup\left\{\psi_{j, k}^n:j \geqslant J;n \leqslant r\right\},
$$
where $k=0, \ldots, N_{J}$ for $\varphi_{J,k}$ and $k=0, \ldots, N_{j+1}$ for $\varphi_{J,k}^n$. The tight needlet system can be constructed on a general Riemannian manifold.
\begin{definition}
Let $\mathcal{M}$ be the compact and smooth Riemannian manifold.
The needlet system $\mathrm{FS}_{J}(\Psi, \mathcal{Q})$ is said to be \textbf{tight} for $f\in L_2(\mathcal{M})$ if and only if  $\mathrm{FS}_{J}(\Psi, \mathcal{Q})\subset L_2(\mathcal{M})$ and
\begin{equation}
    f=\sum_{k=0}^{N_{J}}\left\langle f, \boldsymbol{\varphi}_{J, k}\right\rangle \boldsymbol{\varphi}_{J, k}+\sum_{j=J}^{\infty} \sum_{k=0}^{N_{j+1}} \sum_{n=1}^{r}\left\langle f, \boldsymbol{\psi}_{j, k}^{n}\right\rangle \boldsymbol{\psi}_{j, k}^{n}
    \label{definition}
\end{equation}
or equivalently,
$$
\|f\|_{L_{2}(\mathcal{M})}^{2}=\sum_{k=0}^{N_{J}}\left|\left\langle f, \boldsymbol{\varphi}_{J, k}\right\rangle\right|^{2} +\sum_{j=J}^{\infty} \sum_{k=0}^{N_{j+1}} \sum_{n=1}^{r}\left|\left\langle f, \boldsymbol{\psi}_{j, k}^{n}\right\rangle\right|^{2}.
$$
\end{definition}
The tightness of needlet system on $\mathbb{S}^2$ is given in \citep{wang2020tight}. Here, the following theorem gives the equivalence conditions of needlet systems $\mathrm{FS}_{J}(\Psi, \mathcal{Q})$ on $\mathrm{SO(3)}$ to be tight frames for $L_2(\mathrm{SO(3)})$.
\begin{theorem}
Let $J_0\in \mathbb{Z}$ be an integer and $\boldsymbol{\Psi}:=\{\alpha;\beta^1,\ldots,\beta^r\}\subset L_1(\mathbb{R})$ with $r\geqslant 1$ be a set the needlet generators associated with the filter bank $\boldsymbol{\eta}:=\{a;b_1,\ldots,b_r\}\subset l_1(\mathbb{Z})$ as Eq.~(\ref{filter_bank}) and Eq.~(\ref{needlet_gen}).  Let $\mathcal{Q}=\{Q_{N_j}\}_{j\geqslant J}$, the set of quadrature rules $Q_{N_j} = \{(\omega_{j,k},\boldsymbol{x}_{j,k})\in \mathbb{R}\times \mathrm{SO(3)}:k=0,\ldots, N_j\}$. $\mathrm{FS}_{J}(\Psi, \mathcal{Q})\subset L_2(\mathrm{SO(3)})$ is the needlet system. Then, the following statements are equivalent.
\begin{enumerate}
    \item The needlet system $\mathrm{FS}_{J}(\Psi, \mathcal{Q})$ is a tight frame for $L_2(\mathrm{SO(3)})$ for all $J\geqslant J_0$, i.e., Eq.~(\ref{definition}) holds for all $J\geqslant J_0$.
    \item For all $f\in L_2(\mathrm{SO(3)})$, the following identities hold for all $j \geqslant J_{0}$:
    \begin{equation}
        \begin{aligned}
        &\lim _{j \rightarrow\infty}\left\|\sum_{k=0}^{N_{j}}\left\langle f, \boldsymbol{\varphi}_{j, k}\right\rangle \boldsymbol{\varphi}_{j, k}-f\right\|_{L_{2}(\mathrm{SO(3)})}=0\\
        &\sum_{k=0}^{N_{j+1}}\left\langle f, \boldsymbol{\varphi}_{j+1, k}\right\rangle \boldsymbol{\varphi}_{j+1, k}=\\
        &\qquad\sum_{k=0}^{N_{j}}\left\langle f, \boldsymbol{\varphi}_{j, k}\right\rangle \boldsymbol{\varphi}_{j, k}+
        \sum_{k=0}^{N_{j+1}} \sum_{n=1}^{r}\left\langle f, \boldsymbol{\psi}_{j, k}^{n}\right\rangle \boldsymbol{\psi}_{j, k}^{n}.
        \end{aligned}
        \label{C.1:condition_2}
    \end{equation}
    \item For all $f\in L_2(\mathrm{SO(3)})$, the following identities hold for all $j \geqslant J_{0}$:
    \begin{equation}
        \begin{aligned}
    &\lim _{j \rightarrow \infty} \sum_{k=0}^{N_{j}}\left|\left\langle f, \boldsymbol{\varphi}_{j, k}\right\rangle\right|^{2}=\|f\|_{L_{2}(\mathrm{SO(3)})}^{2}, \\
    &\sum_{k=0}^{N_{j+1}}\left|\left\langle f, \boldsymbol{\varphi}_{j+1, k}\right\rangle\right|^{2}\\
    &\qquad =\sum_{k=0}^{N_{j}}\left|\left\langle f, \boldsymbol{\varphi}_{j, k}\right\rangle\right|^{2}
    +\sum_{k=0}^{N_{j+1}} \sum_{n=1}^{r}\left|\left\langle f, \psi_{j, k}^{n}\right\rangle\right|^{2}.
        \end{aligned}
    \label{C.1:condition_3}
    \end{equation}
    \item The generators in $\boldsymbol{\Psi}$ satisfy, for all $\ell, \ell^\prime\geqslant 0$ and $j\geqslant J_0$,
    \begin{equation}
        \begin{aligned}
    &\lim _{j \rightarrow \infty} \overline{\widehat{\alpha}\left(\frac{\lambda_{\ell}}{2^{j}}\right)} \widehat{\alpha}\left(\frac{\lambda_{\ell^{\prime}}}{2^{j}}\right) \mathcal{U}_{\ell, \ell^{\prime}}\left(Q_{N_{j}}\right)= \\
    &\delta_{\ell, \ell^{\prime}}\left[\overline{\widehat{\alpha}\left(\frac{\lambda_{\ell}}{2^{j+1}}\right)} \widehat{\alpha}\left(\frac{\lambda_{\ell^{\prime}}}{2^{j+1}}\right)
    -\sum_{n=1}^{r} \overline{\widehat{\beta^{n}}\left(\frac{\lambda_{\ell}}{2^{j}}\right)} \widehat{\beta^{n}}\left(\frac{\lambda_{\ell^{\prime}}}{2^{j}}\right)\right]\times\\
    &\quad \mathcal{U}_{\ell, \ell^{\prime}}\left(Q_{N_{j}}\right)
    =\overline{\widehat{\alpha}\left(\frac{\lambda_{\ell}}{2^{j}}\right)} \widehat{\alpha}\left(\frac{\lambda_{\ell^{\prime}}}{2^{j}}\right)  \mathcal{U}_{\ell, \ell^{\prime}}\left(Q_{N_{j+1}}\right),
    \end{aligned}
    \label{C.1:condition_4}
    \end{equation} 
    where 
    $$
\mathcal{U}_{\ell, \ell^{\prime}}\left(Q_{N_{j}}\right):=\sum_{k=0}^{N_{j}} \omega_{j, k} u_{\ell}\left(\boldsymbol{x}_{j, k}\right) \overline{u_{\ell^{\prime}}\left(\boldsymbol{x}_{j, k}\right)}.
$$
    \item The filters in the filter bank $\boldsymbol{\eta}$ satisfy, for all $j\geqslant J_0+1$,
    \begin{equation}
        \begin{aligned}
&\mathcal{U}_{\ell, \ell^{\prime}}\left(Q_{N_{j}}\right) =
\overline{\widehat{a}\left(\frac{\lambda_\ell}{2^{j}}\right)} \widehat{a}\left(\frac{\lambda_ \ell^{\prime}}{2^{j}}\right) \mathcal{U}_{\ell, \ell^{\prime}}\left(Q_{N_{j-1}}\right)\\
&\qquad +\sum_{n=1}^{r} \overline{b^{n}\left(\frac{\lambda_{\ell}}{2^{j}}\right)} \hat{b^{n}}\left(\frac{\lambda_{\ell^{\prime}}}{2^{j}}\right) \mathcal{U}_{\ell, \ell^{\prime}}\left(Q_{N_{j}}\right),         
        \end{aligned}
        \label{C.1:condition_5}
    \end{equation}
where 
$$
    \sigma_{\alpha, \bar{\alpha}}^{j}:=\left\{\left(\ell, \ell^{\prime}\right) \in \mathbb{N}_{0} \times \mathbb{N}_{0}: \widehat{\widehat{\alpha}\left(\frac{\lambda_{\ell}}{2^{j}}\right)} \widehat{\alpha}\left(\frac{\lambda_{\ell^{\prime}}}{2^{j}}\right) \neq 0\right\}.
$$
\end{enumerate}
\end{theorem}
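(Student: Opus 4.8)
The plan is to prove the five statements equivalent through the chain $(1)\Leftrightarrow(2)\Leftrightarrow(3)\Leftrightarrow(4)\Leftrightarrow(5)$, following the tight-frame-on-manifolds framework and adapting the $\mathbb{S}^2$ argument of \citet{wang2020tight} to $\mathrm{SO(3)}$: the only structural change is that the orthonormal eigenbasis $\{u_\ell\}$ of $L_2(\mathrm{SO(3)})$ consists of Wigner-D functions $D^\ell_{mn}$ rather than spherical harmonics, so every harmonic computation carries over once the aliased Gram matrix $\mathcal{U}_{\ell,\ell'}(Q_{N_j})$ of the quadrature nodes takes the role it plays on the sphere. The equivalences $(1)\Leftrightarrow(2)\Leftrightarrow(3)$ are purely functional-analytic and manifold-agnostic, while the harmonic content is concentrated in $(3)\Leftrightarrow(4)\Leftrightarrow(5)$.

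For the first block, write the low-pass projection $P_jf:=\sum_k\langle f,\boldsymbol{\varphi}_{j,k}\rangle\boldsymbol{\varphi}_{j,k}$ and the detail operator $Q_jf:=\sum_k\sum_n\langle f,\boldsymbol{\psi}_{j,k}^n\rangle\boldsymbol{\psi}_{j,k}^n$, so that Eq.~(\ref{definition}) reads $f=P_Jf+\sum_{j\geq J}Q_jf$. Summing the one-step refinement identity $P_{j+1}f=P_jf+Q_jf$ of Eq.~(\ref{C.1:condition_2}) telescopes to $P_{m+1}f=P_Jf+\sum_{j=J}^mQ_jf$, so with $P_mf\to f$ one recovers Eq.~(\ref{definition}), and differencing consecutive levels of Eq.~(\ref{definition}) returns the refinement identity; this gives $(1)\Leftrightarrow(2)$. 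For $(2)\Leftrightarrow(3)$, pairing the operator identities of Eq.~(\ref{C.1:condition_2}) with $f$ and using $\langle P_jf,f\rangle=\sum_k|\langle f,\boldsymbol{\varphi}_{j,k}\rangle|^2$ yields Eq.~(\ref{C.1:condition_3}) by polarization; conversely, the norm refinement identity makes $\langle P_jf,f\rangle$ nondecreasing in $j$, so together with the quadratic limit $\langle P_mf,f\rangle\to\|f\|^2$ it forces $0\leq P_m\leq I$ and hence $\|P_mf\|^2\leq\langle P_mf,f\rangle$, whence $\|P_mf-f\|^2\leq\|f\|^2-\langle P_mf,f\rangle\to0$ upgrades the convergence, while polarization restores the operator refinement identity from its norm form.

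The main work, and the expected main obstacle, is $(3)\Leftrightarrow(4)$. Substituting the harmonic expansions of $\boldsymbol{\varphi}_{j,k}$ and $\boldsymbol{\psi}_{j,k}^n$ from Eq.~(\ref{framelet}) into the coefficients and expanding $f=\sum_\ell\widehat{f}_\ell u_\ell$, the inner $k$-summation over the quadrature nodes produces exactly $\mathcal{U}_{\ell,\ell'}(Q_{N_j})=\sum_k\omega_{j,k}u_\ell(\boldsymbol{x}_{j,k})\overline{u_{\ell'}(\boldsymbol{x}_{j,k})}$, so each side of Eq.~(\ref{C.1:condition_3}) becomes a bilinear form $\sum_{\ell,\ell'}\widehat{f}_\ell\overline{\widehat{f}_{\ell'}}\,c_{\ell,\ell'}^{(j)}$ with coefficients $c_{\ell,\ell'}^{(j)}$ assembled from $\widehat{\alpha}$, $\widehat{\beta^n}$ and the $\mathcal{U}_{\ell,\ell'}$. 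Requiring these identities to hold for \emph{all} $f\in L_2(\mathrm{SO(3)})$ forces equality of the coefficients $c_{\ell,\ell'}^{(j)}$ pair by pair, which is precisely Eq.~(\ref{C.1:condition_4}), and the convergence line likewise reduces to $\lim_{j}\overline{\widehat{\alpha}(\lambda_\ell/2^j)}\widehat{\alpha}(\lambda_{\ell'}/2^j)\mathcal{U}_{\ell,\ell'}(Q_{N_j})=\delta_{\ell,\ell'}$. The delicate points are justifying the interchange of the $k$- and $\ell$-summations, which is controlled because $\widehat{\alpha}$ and $\widehat{\beta^n}$ have compact support so only finitely many degrees $\ell$ enter at each scale $j$, and correctly tracking the off-diagonal aliasing terms $\mathcal{U}_{\ell,\ell'}$ with $\ell\neq\ell'$, which survive unless $Q_{N_j}$ is polynomial-exact to the degree $2^{j+1}$ demanded by the support of the generators.

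Finally, $(4)\Leftrightarrow(5)$ is a rescaling via the two-scale relations of Eq.~(\ref{relations}). Setting $\xi=\lambda_\ell/2^{j+1}$ turns $\widehat{\alpha}(2\xi)=\widehat{a}(\xi)\widehat{\alpha}(\xi)$ and $\widehat{\beta^n}(2\xi)=\widehat{b^n}(\xi)\widehat{\alpha}(\xi)$ into $\widehat{\alpha}(\lambda_\ell/2^j)=\widehat{a}(\lambda_\ell/2^{j+1})\widehat{\alpha}(\lambda_\ell/2^{j+1})$ and the analogous identity for $\widehat{\beta^n}$; substituting these into the generator conditions Eq.~(\ref{C.1:condition_4}) and factoring out the common $\widehat{\alpha}(\lambda_\ell/2^{j+1})$ converts them into the filter conditions Eq.~(\ref{C.1:condition_5}), with the support set $\sigma^j_{\alpha,\bar\alpha}$ recording exactly the pairs $(\ell,\ell')$ on which the factored identity is nontrivial. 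Assembling the four equivalences closes the cycle and proves the theorem.
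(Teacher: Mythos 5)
Your proposal is correct and follows essentially the same route as the paper's proof: the same equivalence chain $(1)\Leftrightarrow(2)\Leftrightarrow(3)\Leftrightarrow(4)\Leftrightarrow(5)$, with telescoping of the one-step refinement identity for $(1)\Leftrightarrow(2)$, polarization for $(2)\Leftrightarrow(3)$, expansion of the coefficient sums into sesquilinear forms in $\mathcal{U}_{\ell,\ell^{\prime}}(Q_{N_j})$ for $(3)\Leftrightarrow(4)$, and substitution of the two-scale relations (\ref{relations}) for $(4)\Leftrightarrow(5)$. The only difference is one of detail: your observation that the norm-refinement identity forces the low-pass projectors to satisfy $0\leq \boldsymbol{P}_{\Phi_j}\leq I$, so that the quadratic-form limit $\langle \boldsymbol{P}_{\Phi_j}f,f\rangle\to\|f\|^2$ upgrades to convergence in $L_2$ norm, fills in a step the paper compresses into the single phrase ``by polarization identity.''
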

\begin{proof}
1$\iff$2: We define projections $\boldsymbol{P}_{\Phi_{j}}$ and $\boldsymbol{P}_{\Psi_j}$ as
\begin{equation*}
    \begin{aligned}
    \boldsymbol{P}_{\Phi_{j}}(f)&:=\sum_{k=0}^{N_j}\left\langle f, \varphi_{j, k}\right\rangle \boldsymbol{\varphi}_{j, k}\\
    \boldsymbol{P}_{\Psi_{j}^{n}}(f)&:=\sum_{k=0}^{N_j}\left\langle f, \psi_{j, k}^{n}\right\rangle \psi_{j, k}^{n}.
    \end{aligned}
\end{equation*}
Since the needlet system $\mathrm{FS}_J$ is tight for $L_2(\mathrm{SO(3})$ for all $J\geqslant J_0$,
\begin{equation*}
\begin{aligned}
f &= \boldsymbol{P}_{\Phi_j}(f) + \sum_{j=J}^{\infty}\sum_{n=1}^{r}\boldsymbol{P}_{\Psi_j^n}(f) \\
&= \boldsymbol{P}_{\Phi_{J+1}}(f)+\sum_{j=J+1}^\infty\sum_{n=1}^r\boldsymbol{P}_{\Psi_j^n}(f)
\end{aligned}
\end{equation*}
for all $f\in L_2(\mathrm{SO(3)})$ and all $J\geqslant J_0$. Thus, in $L_2$ sense,
$$\boldsymbol{P}_{\Phi_{J+1}}(f)=\boldsymbol{P}_{\Phi_J}(f) + \sum_{n=1}^r\boldsymbol{P}_{\Psi_j^n}(f).$$
Recursively, we have
\begin{equation}
    \boldsymbol{P}_{\Phi_{m+1}}(f) = \boldsymbol{P}_{\Phi_J}(f)+\sum_{j=J}^m\sum_{n=1}^r\boldsymbol{P}_{\Psi_j^n}(f)
    \label{mm_eq}
\end{equation}
for all $m\geqslant J$ and $J\geqslant J_0$. Let $m\rightarrow \infty$, in $L_2$ sense, we obtain
\begin{equation}
    \lim_{m\rightarrow\infty}\boldsymbol{P}_{\Phi_{m+1}}(f) = \boldsymbol{P}_{\Phi_J}(f) + \sum_{j=J}^\infty\sum_{n=1}^r\boldsymbol{P}_{\Psi_J^n}(f) = f,
    \label{mm_eq_inf}
\end{equation}
which is Eq.~(\ref{C.1:condition_2}). Consequently, we proof that 1$\iff$2. Conversely, by Eq.~(\ref{C.1:condition_2}) and Eq.~(\ref{mm_eq}), let $m\rightarrow\infty$, we can get Eq.~(\ref{definition}). Thus, 2$\iff$1.

2$\iff$3: We can deduce the equivalence between 2 and 3 by  polarization identity.

3$\iff$4: For $f\in L_2(\mathrm{SO(3)})$, by the formulas in Eq.~(\ref{framelet}) and the orthonormality of Wigner D-functions, which are the basis of $\mathrm{SO(3)}$, we obtain 
\begin{equation}
\begin{aligned}
 \left\langle f, \boldsymbol{\varphi}_{j, k}\right\rangle&=\sqrt{\omega_{j, k}} \sum_{\ell=0}^{\infty} \widehat{\widehat{\alpha}\left(\frac{\lambda_{\ell}}{2^{j}}\right)} \widehat{f}_{\ell} u_{\ell}\left(\boldsymbol{x}_{j, k}\right)\\
 \left\langle f, \boldsymbol{\psi}_{j, k}^{n}\right\rangle&=\sqrt{\omega_{j+1, k}} \sum_{\ell=0}^{\infty} \widehat{\widehat{\beta^{n}}\left(\frac{\lambda_{\ell}}{2^{j}}\right)} \widehat{f}_{\ell} u_{\ell}\left(\boldsymbol{x}_{j+1, k}\right).
\end{aligned}
\end{equation}
Then, we have 
\begin{equation*}
    \begin{aligned}
&\sum_{k=0}^{N_{j}}\left|\left\langle f, \varphi_{j, k}\right\rangle\right|^{2} =\sum_{k=0}^{N_{j}} \omega_{j, k}\left|\sum_{\ell=0}^{\infty} \overline{\widehat{\alpha}\left(\frac{\lambda_{\ell}}{2^{j}}\right)} \widehat{f}_{\ell} u_{\ell}\left(\boldsymbol{x}_{j, k}\right)\right|^{2} \\
&=\sum_{\ell=0}^{\infty} \sum_{\ell^{\prime}=0}^{\infty} \widehat{f_{\ell}} \overline{\hat{f}_{\ell^{\prime}}} \overline{\widehat{\alpha}\left(\frac{\lambda_{\ell}}{2^{j}}\right)} \widehat{\alpha}\left(\frac{\lambda_{\ell^{\prime}}}{2^{j}}\right) \sum_{k=0}^{N_{j}} \omega_{j, k} u_{\ell}\left(\boldsymbol{x}_{j, k}\right) \overline{u_{\ell^{\prime}}\left(\boldsymbol{x}_{j, k}\right)} \\
&=\sum_{\ell=0}^{\infty} \sum_{\ell^{\prime}=0}^{\infty} \widehat{f}_{\ell} \overline{\widehat{f}_{\ell^{\prime}}} \overline{\widehat{\alpha}\left(\frac{\lambda_{\ell}}{2^{j}}\right)} \widehat{\alpha}\left(\frac{\lambda_{\ell^{\prime}}}{2^{j}}\right) \mathcal{U}_{\ell, \ell^{\prime}}\left(Q_{N_{j}}\right) \\
&=\sum_{\ell=0}^{\infty}\left|\widehat{f}_{\ell}\right|^{2}\left|\widehat{\alpha}\left(\frac{\lambda_{\ell}}{2^{j}}\right)\right|^{2} \mathcal{U}_{\ell, \ell}\left(Q_{N_{j}}\right)\\
&+\sum_{\ell=0}^{\infty} \sum_{\ell^{\prime}=0, \ell^{\prime} \neq \ell}^{\infty} \widehat{f}_{\ell} \overline{\widehat{f}_{\ell^{\prime}}}\overline{\widehat{\alpha}\left(\frac{\lambda_{\ell}}{2^{j}}\right)} \widehat{\alpha}\left(\frac{\lambda_{\ell^{\prime}}}{2^{j}}\right) \mathcal{U}_{\ell, \ell^{\prime}}\left(Q_{N_{j}}\right).
\end{aligned}
\end{equation*}
Then, Eq.~(\ref{C.1:condition_3}) holds only and if only Eq.~(\ref{C.1:condition_4}) holds.

4$\iff$5: By condition \ref{relations}, we have 
\begin{equation*}
\begin{aligned}
&\overline{\widehat{\alpha}\left(\frac{\lambda_{\ell}}{2^{j-1}}\right)} \widehat{\alpha}\left(\frac{\lambda_{\ell^{\prime}}}{2^{j-1}}\right) \mathcal{U}_{\ell, \ell^{\prime}}\left(Q_{N_{j-1}}\right)+\\
&\sum_{n=1}^{r} \overline{\widehat{\beta^{n}}\left(\frac{\lambda_{\ell}}{2^{j-1}}\right)} \widehat{\beta^{n}}\left(\frac{\lambda_{\ell^{\prime}}}{2^{j-1}}\right) \mathcal{U}_{\ell, \ell^{\prime}}\left(Q_{N_{j}}\right)\\
&=\overline{\widehat{a}\left(\frac{\lambda_{\ell}}{2^{j}}\right)} \widehat{a}\left(\frac{\lambda_{\ell^{\prime}}}{2^{j}}\right) \mathcal{U}_{\ell, \ell^{\prime}}\left(Q_{N_{j-1}}\right)\widehat{\alpha}\left(\frac{\lambda_{\ell}}{2^{j}}\right) \widehat{\alpha}\left(\frac{\lambda_{\ell^{\prime}}}{2^{j}}\right)\\
&+\sum_{n=1}^{r} \overline{\widehat{b^{n}}\left(\frac{\lambda_{\ell}}{2^{j}}\right)} \widehat{b^{n}}\left(\frac{\lambda_{\ell^{\prime}}}{2^{j}}\right) \mathcal{U}_{\ell, \ell^{\prime}}\left(Q_{N_{j}}\right)\widehat{\alpha}\left(\frac{\lambda_{\ell}}{2^{j}}\right) \widehat{\alpha}\left(\frac{\lambda_{\ell^{\prime}}}{2^{j}}\right)
\end{aligned}
\end{equation*}
Therefore, the equivalence between Eq.~(\ref{C.1:condition_4}) and Eq.~(\ref{C.1:condition_5}) holds.
\end{proof}

\section{Needlet Decompostion and Reconstruction}\label{decom}
\begin{algorithm}
\caption{Decomposition of Multi-Level Needlet Transform}
\label{algo:decomp.multi.level}
\begin{algorithmic}
\STATE{\bfseries Input: }$v_J$ -- a $(\Lambda_J,N_J)$-sequence
\STATE{\bfseries Output: }($\{w_{J-1}^n,w_{J-2}^n,\ldots,w_{J_0}^n\}_{n=1}^r,v_{J_0}$)
\STATE $v_J \rightarrow \hat{v}_J$
\FOR{$j\leftarrow J$ \textbf{to} $J_0+1$}
\STATE $\widehat{v}_{j-1} \longleftarrow \widehat{v}_{j, \cdot}\overline{\widehat{a}}\left(2^{-j} \lambda .\right)$
\FOR{$n\leftarrow 1$ \textbf{to} $r$}
\STATE $\widehat{w}_{j-1}^{n} \longleftarrow \widehat{v}_{j, \cdot}\overline{\widehat{b^{n}}}\left(2^{-j} \lambda_{.}\right)$
\STATE $w_{j-1}^n \leftarrow \hat{w}_{j-1}^n$
\ENDFOR
\ENDFOR
\STATE $v_{J_0}\leftarrow \hat{v}_{J_0}$
\end{algorithmic}
\end{algorithm}

\begin{algorithm}
\caption{Reconstruction of Multi-Level Needlet Transform}
\label{algo:rec.multi.level}
\begin{algorithmic}
\STATE{\bfseries Input: }($\{w_{J-1}^n,w_{J-2}^n,\ldots,w_{J_0}^n\}_{n=1}^r,v_{J_0}$)
\STATE{\bfseries Output: }$v_J$ -- a $(\Lambda_J,N_J)$-sequence
\STATE $\hat{v}_{J_0} \leftarrow {v}_{J_0}$
\FOR{$j\leftarrow J_0+1$ \textbf{to} $J$}
\FOR{$n\leftarrow 1$ \textbf{to} $r$}
\STATE $\widehat{w}_{j-1}^{n} \longleftarrow w_{j-1}^n$
\ENDFOR
\STATE $\widehat{\mathrm{v}}_{j} \longleftarrow\left(\widehat{\mathrm{v}}_{j-1, \cdot}\right) \widehat{a}\left(2^{-j} \lambda .\right)+\sum_{n=1}^{r} \widehat{\mathrm{w}}_{j, \cdot}^{n} \widehat{b^{n}}\left(2^{-j} \lambda.\right)$
\ENDFOR
\STATE $v_{J}\leftarrow\hat{v}_{J}$
\end{algorithmic}
\end{algorithm}
As Eq.~(\ref{v&w}), we have $v_{j,k}$ and $w_{j-1,k}^n$ as $(\Lambda_j,N_j)$ sequences. We have the following \emph{decomposition relation}:
\begin{equation*}
    \begin{aligned}
v_{j-1, k} &=\sum_{\ell=0}^{\Lambda_{j-1}} \widehat{f_{\ell}} \overline{\widehat{\alpha}\left(\frac{\lambda_{\ell}}{2^{j-1}}\right)} \sqrt{\omega_{j-1, k}} u_{\ell}\left(\boldsymbol{x}_{j-1, k}\right) \\
&=\sum_{\ell=0}^{\Lambda_{j-1}} \widehat{f_{\ell}} \overline{\widehat{\alpha}\left(\frac{\lambda_{\ell}}{2^{j}}\right)} \overline{\widehat{a}\left(\frac{\lambda_{\ell}}{2^{j}}\right)} \sqrt{\omega_{j-1, k}} u_{\ell}\left(\boldsymbol{x}_{j-1, k}\right) \\
&=\sum_{\ell=0}^{\Lambda_{j}}\widehat{v_{j,\ell}} \overline{\widehat{\alpha}\left(\frac{\lambda_{\ell}}{2^{j}}\right)} \sqrt{\omega_{j-1, k}} u_{\ell}\left(\boldsymbol{x}_{j-1, k}\right) \\
&=\left[\left(v_{j} *_{j} a^{\star}\right) \downarrow_{j}\right](k),
\end{aligned}
\end{equation*}
where $\downarrow$ denotes the down-sampling operator.
Similarly, for $k=0,\ldots, N_{j-1}$ and $n=1,\ldots,r$,
\begin{equation*}
    \begin{aligned}
    w_{j-1, k}^{n}&=\sum_{\ell=0}^{\Lambda_{j}} \widehat{f_{\ell}} \overline{\widehat{\beta^{n}}\left(\frac{\lambda_{\ell}}{2^{j-1}}\right)} \sqrt{\omega_{j-1, k}} u_{\ell}\left(\boldsymbol{x}_{j-1, k}\right)\\
    &=\left(v_{j} *_{j}\left(b^{n}\right)^{*}\right)_{k} .
    \end{aligned}
\end{equation*}
Therefore, we have the following identity,
\begin{equation*}
    \begin{aligned}
    \widetilde{v}&:=\left(v_{j-1} \uparrow_{j}\right) *_{j} a+\sum_{n=1}^{r} w_{j-1}^{n} *_{j} b^{n}\\
    &=\left(\left(\left(v_{j} *_{j} a^{\star}\right) \downarrow_{j}\right) \uparrow_{j}\right) *_{j} a+\sum_{n=1}^{r}\left(v_{j} *_{j}\left(b^{n}\right)^{\star}\right) *_{j} b^{n}
    \end{aligned}
\end{equation*}
The multi-level decomposition and reconstruction algorithms are shown as Algorithms~\ref{algo:decomp.multi.level} and \ref{algo:rec.multi.level}. The Fourier transforms in the algorithms can be implemented by FFT. Thus, the fast multi-level needlet transform on $\mathbb{S}^2$ with $N$ the size of the input data has the computational complexity $\mathcal{O}(N\sqrt{\log N})$.
With needlet filters given, we can pre-compute the need coefficients and store data as frequency domain signals. Further decomposition into finer scale and reconstruction to obtain lower-level approximation information are also optional, depending on requirements of applications.

\section{Error Bound of Rotation Equivariance}\label{bound}
In order to reduce the numerical error caused by repeated forward and backward FFTs, and also to decrease the model complexity, we apply non-linear shrinkage function on the high-pass coefficients with an controllable parameter $\sigma$, which is an analogue to the noise level of the denoising model. Since the low-pass coefficients provide approximate information of the input signal, our model contains the good property of approximate rotation equivariance. According to the needlets theory, the rotation equivariance error brought about by using the shrinkage on the high passes is estimable. The error is defined as Eq.~(\ref{error_thm}), which has the convergence order $2^{-(J_0+1)s}$.
\begin{proof}
Define 
$$f_J = f_{J_0}^{(\textup{L})}+f_{J_0}^{(\textup{H}),J}=f_{J_0}^{(\textup{L})}+\sum_{j=J_0}^J\left<f,\psi_j\right>\psi_j$$ 
as the spherical needlet approximation. By \citet[Theorem 3.12]{wang2017fully}, for $f\in \mathbb{W}_p^s(\mathbb{S}^2)$ with $s>0$ and $J\geqslant 0$, we have $\left\| f-f_J\right\|\leqslant C_1 2^{-Js}$ and $\left\| f-f_{J_0}^{(\textup{L})}\right\|\leqslant C_2 2^{-J_0s}$, $C_1$ and $C_2$ are constants that depend only on $d,p,s,h,$ and filter smoothness $\kappa$. Therefore, 
\begin{equation}
    \begin{aligned}
    \left\|f_{J_0}^{(\textup{H}),J} \right\|^2 &= \sum_{\ell\leqslant 2^J}\left\|\widehat{f_\ell^{(\textup{H})}} \right\|^2 = \left\| f_J-f_{J_0}^{(\textup{L})}\right\|^2\\
    &\leqslant \left\|f-f_J \right\|^2 + \left\|f-f_{J_0}^{(\textup{L})} \right\|^2\\
    &\leqslant C_1 2^{-Js}+C_2 2^{-J_0s}\leqslant C2^{-J_0s},
    \end{aligned}
    \label{order}
\end{equation}
where $C$ is a sufficiently large constant depending on $d,p,s,h,\kappa, C_1$ and $C_2$. The Eq.~(\ref{order}) holds for all $J$. Then, Eq.~(\ref{error_thm}) satisfies the following inequalities,
\begin{equation*}
    \begin{aligned}
\textup{Error}&=\sum_{\ell=0}^B\left\| \textup{Shr}(\widehat{L_Rf\star \phi})_\ell^{(\textup{H})}-D^\ell(R)\textup{Shr}(\widehat{f\star\phi}_\ell^{(\textup{H})})\right\|^2\\
&=\sum_{\ell=0}^B\left\|\textup{Shr}(D^\ell(R)\widehat{f}_\ell^{(\textup{H})}\widehat{\phi}_\ell) -D^\ell(R)\textup{Shr}(\widehat{f}_\ell^{(\textup{H})}\widehat{\phi}_\ell)\right\|^2\\
&\leqslant\sum_{\ell=0}^B\left\| \textup{Shr}(D^\ell(R)\widehat{f}_\ell^{(\textup{H})}\widehat{\phi}_\ell)\right\|^2+\left\| D^\ell(R)\textup{Shr}(\widehat{f}_\ell^{(\textup{H})}\widehat{\phi}_\ell)\right\|^2\\
&\leqslant \left\| D^\ell(R)\widehat{f}_\ell^{(\textup{H})}\widehat{\phi}_\ell\right\|^2+\left\| D^\ell(R)\right\|^2\left\|\widehat{f}_\ell^{(\textup{H})}\widehat{\phi}_\ell\right\|^2\\
&\leqslant 2\sum_{\ell=0}^B\left\|\widehat{f}_\ell^{(\textup{H})}\widehat{\phi}_\ell \right\|^2.
    \end{aligned}
\end{equation*}
If the scale of the low-pass is $J_0$, then we obtain
$$\textup{Error}\leqslant 2\sum_{\ell\geqslant 2^{J_0+1}}\left\|\widehat{f}_\ell^{(\textup{H})}\widehat{\phi}_\ell \right\|^2\leqslant C_{\phi}\sum_{\ell\geqslant 2^{J_0+1}}\left\|\widehat{f}_\ell^{(\textup{H})} \right\|^2,$$
where $C_\phi$ is a constant depending on the filter $\phi$. By Parseval's identity and Eq.~(\ref{order}), 
$$\textup{Error}\leqslant C_{\phi}\sum_{\ell\geqslant 2^{J_0+1}}\left\|f_{J_0+1}^{(\textup{H})} \right\|^2\leqslant\Tilde{C_\phi}2^{-(J_0+1)s}.$$
\end{proof}

\section{Rotation Equivariance of Spectral Pooling}
Denote $[D^{0}(R), \cdots, D^{\ell-1}(R),D^{\ell}(R)]$ as $D(R)$, then
\begin{equation*}
    \begin{split}
        \textup{P}(\widehat{L_Rf})&=\textup{P}([D^{0}(R)\hat{f}_{0},\cdots,D^{\ell}(R)\hat{f}_{\ell}])\\
        &=[D^{0}(R)\hat{f}_{0},\cdots,D^{\ell/2}(R)\hat{f}_{\ell/2}]\\
        &=\textup{P}(\hat{f})\odot[D^{0}(R),\cdots,D^{\ell/2}(R)]\\
        &=\textup{P}(\hat{f})\odot\textup{P}(D(R))
    \end{split}
\label{Pooling_proof}
\end{equation*}
where $\odot$ denotes element-wise multiplication, and \textup{P($\cdot$)} denotes spectral pooling operator. Thus, the spectral pooling operator is equivariant, due to the $R$-related operator $\textup{P}(D(R))$.

\section{Ablation Study}
\subsection{Equivariance Error}
It can be proven that our $\mathbb{S}^2$-needlet convolution, $\mathrm{SO(3)}$-needlet convolution without shrinkage and spectral pooling are equivariant to $\mathrm{SO(3)}$ transforms for the continuous case. In our implementation, we utilize the polynomial-exact quadrature rule to sample the sphere to reduce the numerical error due to discretization. Table~\ref{tab:eq err} shows the rotation equivariance error of the modules in our framework. Experimental results verify that the errors are close to the machine error of floating points, except $\mathrm{SO(3)}$-needlet convolution with shrinkage filtering. We observe that the equivariance errors introduced by wavelet shrinkage with a small value of $\sigma$ (e.g., $\sigma = 0.001$) are 2e-4 and 5e-7 in   \textit{Single} and \textit{Double} floating-point format respectively, which are negligible. 
\begin{table}[t]
    \centering
    \setlength{\tabcolsep}{0.5mm}
\begin{tabular}{lcc}
\toprule
Operator& Error (\textit{Single}) & Error (\textit{Double}) \\ 
\midrule
\textsc{$\mathbb{S}^2$-Conv}& 2e-7& 7e-16 \\
\textsc{SO(3)-Conv}& 1e-7& 8e-16\\
\textsc{SO(3)+ReLU}& 1e-7& 8e-16\\
\textsc{SO(3)+Shrinkage} & 2e-4& 5e-7  \\
\textsc{Pooling}& 0 & 0\\
\bottomrule
\end{tabular}
\vspace{-0.2cm}
\caption{Equivariance Error Results. \textit{Single} denotes Single-precision floating-point format. \textit{Double} denotes Double-precision floating-point format. The values are calculated from the average of ten trials. ReLU function in \textsc{SO(3)+ReLU} is applied in the spatial domain, thus involving an FFT and inverse FFT. }\label{tab:eq err}\vspace{1mm}
\end{table}

\subsection{Sensitivity Analysis}
As $\sigma$ is a hyperparameter in shrinkage activation function, it is critical to know how this value affects our model equivariant property. Therefore, we takes different values of $\sigma$ ranging from 1e-7 to $1$ to see how the equivariant error changes and how much this signal is compressed. We use the SO(3) signal with the bandwidth $L = 128$ and send it to the SO(3)-needlet convolution layer with $J = 7$. As shown in Figure~\ref{equal error Ablation analysis}, when $\sigma$ is larger than $0.1$, the equivariance error is about $0.1$, which may lower the accuracy of our equivariant network. When the $\sigma$ is smaller than 1e-6, the equivariance error is approaching to the single-precision machine error. For compression rate, the shrinkage operation will cut off $20\%$ signal information when $\sigma$ is about $0.1$ and approaching to 0 when $\sigma$ is less than 1e-6.
\begin{figure}[th]
    \centering
    \includegraphics[width=0.44\linewidth]{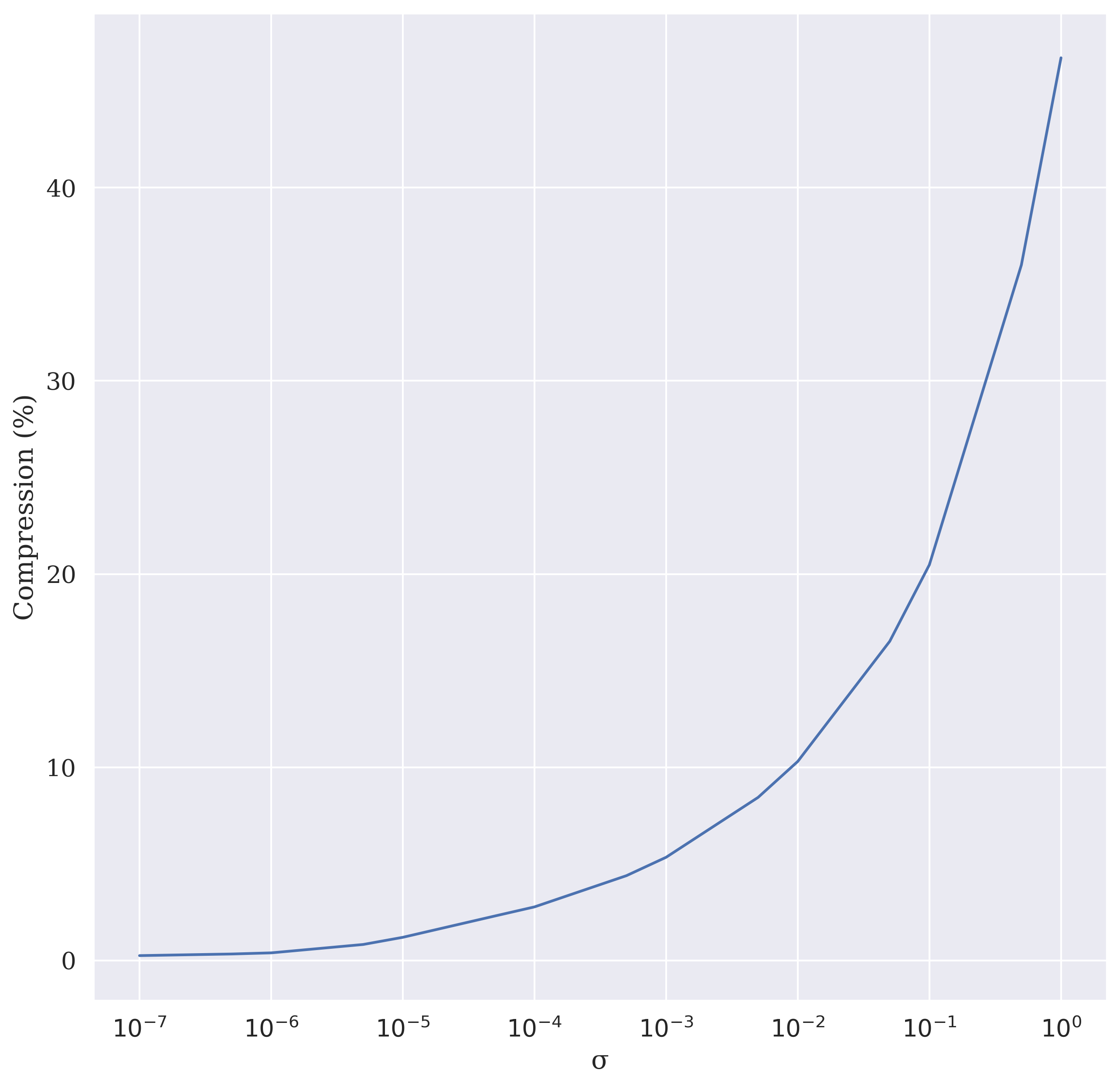}
    \includegraphics[width = 0.45\linewidth]{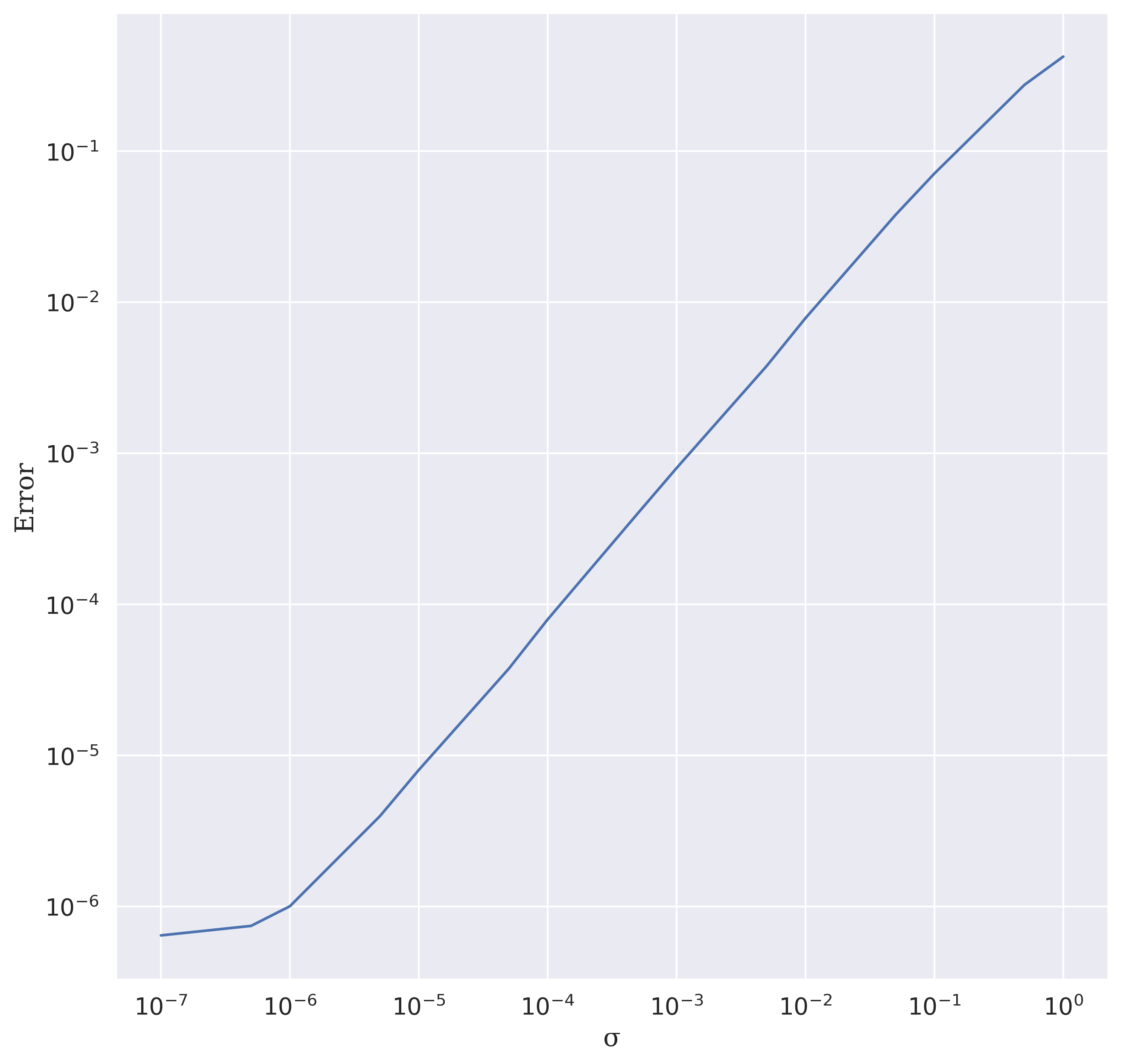}
    \caption{Sensitivity analysis for $\sigma$. The equivariance error is near machine error when $\sigma$ is less than 1e-6. The shrinkage activation function will nearly compress $20\%$ signal information with $\sigma=0.1$ and approximate the identity function as $\sigma$ is close to $0$.}
    \label{equal error Ablation analysis}
    \vspace{-0.5cm}
\end{figure}



\end{document}